%% file: arXiv_version_finitary_bounded_degree.tex
\documentclass[11pt]{article}

\usepackage{alltheorems}
\usepackage{hyperref}

\usepackage{authblk}

\usepackage[dvipsnames]{xcolor}
\usepackage{amsfonts}
\usepackage{amssymb}

\usepackage{amsmath}
\usepackage{constants}
\usepackage{amsthm}
\usepackage{xspace}
\usepackage{paralist}
\usepackage{tikz}
\usetikzlibrary{shapes,calc,positioning,arrows,hobby}
\makeatletter
\newcommand{\newreptheorem}[2]{%
\newenvironment{rep#1}[1]{%
 \def\rep@title{#2 \ref{##1}}%
 \begin{rep@theorem}}%
 {\end{rep@theorem}}}
\makeatother

\usepackage{ifpdf}

\ifpdf
 
\fi

\usepackage{tikz}
\usetikzlibrary{shapes,calc,positioning,arrows,}

\theoremstyle{plain}
\input{macros}

\title{Constant time testability of first-order logic with modulo counting on finitary graphs}

\author{Isolde Adler\footnote{University of Bamberg, email: isolde.adler@uni-bamberg.de, \\ orcid: 0000-0002-9667-9841} \and Jenny Stimpson\footnote{University of Bamberg, email: j.stimpson@uni-bamberg.de, \\ orcid: 0009-0008-4032-2123}}

\date{}

\begin{document}
	

\begin{titlepage}
		
\maketitle
		
\thispagestyle{empty}
\begin{abstract}
	This paper studies algorithmic meta theorems for property testing with \emph{constant
	running time} in the bounded degree model.
	In (Adler, Harwath 2018) it was shown that
	on graph classes $\mathcal C^{w}_d$ consisting of all graphs with both 
	degree at most $d$ and treewidth at most $w$, every problem expressible in
	\emph{monadic second-order logic with counting} ($\cmso$) is testable with \emph{polylogarithmic} 
	running time (where $d,w\in \mathbb N$ are fixed).
	It was left open whether this can be improved to \emph{constant} running time.
	
	In this paper we give a positive answer for testing $\cmso$ on classes $\mathcal C^{c}_d$, where
	$d$ bounds the degree and $c$ bounds the component size.
	Our main result shows constant time testability of \emph{first-order logic with modulo counting} ($\fomod$)
	on $\mathcal C^{c}_d$. For our proof we tailor
	Hanf normal form of 
	$\fomod$ to our setting, and we exhibit a number-theoretic `patchability' condition that
	allows to infer global information on the input graph from a local sample of constant size. We
	believe that our `patchability' might be of independent interest.
	The step from $\fomod$ to $\cmso$ then follows from a result by (Eickmeyer, Elberfeld, Harwath, 2017) 
	on the expressive power of order invariant monadic second-order logic on classes of bounded treedepth.
	
	\bigskip
	\textbf{Keywords:} Property testing, constant time algorithms, bounded degree model, algorithmic meta-theorems, first-order logic with modulo counting, monadic second-order logic.

\end{abstract}

\end{titlepage}

\section{Introduction}\label{sec:intro}

The past decades have seen a tremendous increase in the volume of stored data, posing a huge challenge to computational power. Moreover, the data is often stored remotely, making access costly. In this context, even just reading the whole input once may be infeasible. \emph{Property Testing} addresses this by the design of algorithms
that derive global information on the structure of the input by only exploring a small number of
local parts of it. These algorithms are randomised and allow for a small error. 
A property testing algorithm is required to 
distinguish inputs that satisfy a given property from inputs that
are $\epsilon$-far from satisfying the property with high probability correctly. Intuitively,
an input is called $\epsilon$-far from satisfying a property if more than an $\epsilon$-fraction of the input 
needs to be modified in
order to make it satisfy the property. The exact definition of the distance measure depends
on the model. 

Our work is based on the \emph{bounded degree graph model} of property testing introduced
by Goldreich and Ron~\cite{GoldreichR02}. 
This model considers graphs whose degree (\ie the number of edges incident to any vertex) 
is bounded from above by a constant $d\in \mathbb N$.
Formally, a \emph{property} $P$ is a class of graphs that is closed 
under isomorphism.
Let $\epsilon\in (0,1)$. Two $d$-bounded degree graphs $G$ and $H$, both on $n$ vertices, 
are called \emph{$\epsilon$-close}, if they can 
be made isomorphic by modifying (adding or deleting) at most $\epsilon dn$ edges. 
If $G$ and $H$ are not $\epsilon$-close they are called \emph{$\epsilon$-far}. For a property $P$ of $d$-bounded degree graphs, a
graph $G$ is \emph{$\epsilon$-close to $P$}, if $G$ is $\epsilon$-close to some member of $P$. 
Otherwise $G$ is \emph{$\epsilon$-far from $P$}.
The algorithms have \emph{oracle access} to an input graph $G$ of maximum degree $d$. Here we assume that 
the vertex set of $G$ is $\{1,2,\ldots,n\}$. More precisely, an algorithm gets the number $n$ as input, and
it can access $G$ via \emph{neighbour queries} to an oracle, \ie for any specified vertex $v$ and index $i\in \{1,2,\ldots,d\}$, the oracle returns the $i$-th neighbour of $v$ if it exists, and a special symbol $\bot$ otherwise, in constant time. 

A property $P$ is \emph{testable with query complexity $q(n)$} in the bounded degree model, if for every $\varepsilon\in (0,1)$ 
and every input size $n$ there is a probabilistic algorithm $A_{\epsilon,n}$ (called an \emph{$\varepsilon$-tester}),
that accepts $G$ with probability at least $2/3$, if $G\in P$, and rejects $G$ with probability at least $2/3$ if $G$ is $\epsilon$-far from $P$, 
while using at most $q(n)$ oracle queries. If $q$ is independent of $n$, then
$P$ is \emph{constant-query testable}. (Here $q$ may depend on $\epsilon$, $d$ and $P$.)
Until now, despite some effort, a \emph{characterisation of the properties that are constant-query testable} in the bounded degree model is still elusive.

Towards this goal, Newman and Sohler~\cite{NewmanS13} proved a general result,
showing that every property of planar graphs (and, more generally, of 
\emph{hyperfinite} graphs) is
constant-query testable. 
This theorem is a \emph{meta-theorem} in the sense that it treats many properties simultaneously.

In this paper we are interested in the \emph{running time}, which is measured as usual in terms of $n$.
For many individual properties is it easy to see that the query complexity of a tester dominates its running time. 
However, while
a bound on the running time is also a bound on the query complexity, the converse is not true in general.
 Indeed, using the Newman-Sohler theorem it is not hard to see that 
there are undecidable properties  
that are constant query testable, as is noted in~\cite{AdlerH18};
the idea is to let the number of vertices of edgeless graphs
represent the Goedel numbering of Turing machines, and to let the property be the class of 
graphs representing a Turing machine that halts on empty input. Since this property is a
class of edgeless graphs it is hyperfinite and thus constant-query testable.

In practice, \emph{sublinear running time is highly desirable}. This motivates our aim of providing 
meta-theorems for property testing with constant \emph{running time}. 
Towards this we use the framework of logic, which allows to treat many properties simultaneously. 
First-order logic on graphs $(\fo)$ speaks about the edge relation and allows quantification over vertices as usual.
The logic $\fomod$ extends $\fo$ by additional modulo counting quantifiers. Monadic second-order logic $(\mso)$
extends $\fo$ by quantification over subsets of vertices, and $\cmso$ extends $\mso$ by modulo counting quantifiers.
The following result was shown in~\cite{AdlerH18}. (Indeed, it was shown in the slightly more general setting of
relational structures.)

\begin{theorem}[\cite{AdlerH18}]\label{theo:AH18}
	Let $d,w\in \mathbb N$. On classes $\mathcal C^{w}_d$ containing all graphs with both 
	degree at most $d$ and 
	treewidth at most $w$ every problem expressible in $\cmso$
	is testable with \emph{polylogarithmic} 
	running time.
\end{theorem}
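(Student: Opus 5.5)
We would prove the statement along the lines of the local-partition approach for hyperfinite classes, adapted so that every step takes only polylogarithmic time.

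\textbf{Step 1: reduce to bounded component size.}
Graphs in $\mathcal C^{w}_d$ are hyperfinite. Bounded treewidth together with bounded degree gives balanced separators of size $O(w)$. Recursively removing such separators, we can delete at most $\epsilon dn/4$ edges and obtain a graph $G'$ whose components all have size at most some $k=k(\epsilon,d,w)$.

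\textbf{Step 2: simulate a partition oracle.}
Using the partition-oracle technique of Hassidim, Kelner, Nguyen and Onak, or Levi and Ron, we can answer queries about which part of $G'$ contains a given vertex. The cost of one query is polylogarithmic in $n$ here, since with bounded treewidth the local exploration and the decomposition of each part can be computed efficiently.

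\textbf{Step 3: estimate the component-type frequency vector.}
Sample a constant number of vertices and find each one's part. This yields an estimate of how often each isomorphism type of bounded-size component occurs.

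\textbf{Step 4: decide $\cmso$ from frequencies.}
Take a $\cmso$ sentence $\varphi$. By Feferman--Vaught-style composition, or Courcelle-type arguments on the disjoint union, whether $\varphi$ holds in a disjoint union of bounded-size components depends only on the counts of each type. It depends on each count only up to a threshold plus a residue modulo some $m$. Hence the set of frequency vectors of graphs satisfying $\varphi$ is a semilinear set.

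The tester accepts if and only if the estimated frequency vector is close, in $\ell_1$, to the frequency vector of some graph in $\mathcal C^{w}_d$ with $n$ vertices that satisfies $\varphi$.

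\textbf{Step 5: relate $G$ to graphs near $\varphi$.}
We also need the property that $G$ is close to some model of $\varphi$ if and only if $G$ is close to a model of $\varphi$ whose bounded-size partition has a nearby frequency vector. A model of $\varphi$ itself may have unbounded components. Our plan is to apply the regularity-type argument of Newman--Sohler: since $\varphi$-models are hyperfinite, their partitions have frequency vectors, and the set of achievable vectors is recognisable.

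\textbf{Main obstacle.}
This last step is the hardest. We must show that global $\cmso$ satisfaction (including modulo counts over the whole graph, and large components) is determined approximately by constant-size local statistics. We must also show that the resulting decision, namely whether a vector is near the achievable set for size $n$, can be computed in time polylogarithmic in $n$.

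For the second point, we would precompute the finite set of achievable threshold/residue patterns. For the first point, we would handle modulo counts by patching in small gadget components. This costs only $O(1)$ edge modifications, which is negligible relative to $\epsilon dn$ for large $n$. Small $n$ is handled by brute force.

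The polylogarithmic bottleneck is therefore the partition-oracle simulation of Step 2. Step 5 only costs constant overhead.
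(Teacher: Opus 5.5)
The paper gives no proof of this statement; it is quoted from \cite{AdlerH18}. Judged on its own, your proposal has a genuine gap at Step~5, and it misplaces where the polylogarithmic cost arises.

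Steps~1--3 are standard. Algorithmically you do not even need a partition oracle: by Newman--Sohler, estimating radius-$r$ ball frequencies with constantly many samples suffices, since on hyperfinite classes close frequency vectors imply $\eps$-closeness. Step~4, however, concerns the wrong graph. Only satisfaction of $\varphi$ in the partitioned graph $G'$ is determined by threshold/residue counts of component types, and $G'$ need not satisfy $\varphi$ when $G$ does. Moreover, models of $\varphi$ in $\mathcal C^{w}_d$ can have unbounded components (take $\varphi$ saying ``connected''), so no finite table of threshold/residue patterns describes them. You flag this yourself, but Step~5 does not resolve it.

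What the tester really has to decide is this: given $n$ and the estimate $\tilde v$, is there an $n$-vertex $H\in\mathcal C^{w}_d$ with $H\models\varphi$ whose frequency vector is close to $\tilde v$? You assert that the achievable set is ``recognisable'' but give no argument. Two things are needed. First, a structural description of the set of histograms of $\varphi$-models of bounded treewidth, for instance semilinearity via a tree automaton for $\varphi$ on width-$w$ decompositions and Parikh's theorem. Second, an algorithm deciding whether this set contains a vector of total size exactly $n$ near $n\tilde v$.

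That last test is not constant overhead, and patching with $O(1)$ edge modifications cannot replace it. The vertex count is fixed at $n$, so whether leftover vertices can be absorbed depends on the arithmetic of $n$. Example~\ref{ex:closeness} already shows this on $\mathcal C^2_1$: $G^{2n}$ and $G^{2n+1}$ have essentially identical local statistics, yet $G^{2n}\models\psi$ while $G^{2n+1}$ is $\eps$-far, so no patch exists.

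Even on finitary classes the paper needs the gcd/Frobenius argument of Theorem~\ref{thm:testingFO+MOD} to handle this. On bounded treewidth it becomes an integer-programming-type question with numbers of size $n$. This, not the local exploration, is the costly step; the paper notes that constant time becomes possible once vertex modifications are allowed \cite{AdlerF23}, i.e.\ once the exact-$n$ constraint disappears. Conversely, partition oracles for minor-free classes (Hassidim--Kelner--Nguyen--Onak, Levi--Ron) answer queries in time independent of $n$. So if Step~5 really cost only constant overhead, your plan would yield constant running time and settle the question this paper explicitly leaves open.

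The missing idea is therefore exactly a characterisation of the achievable histograms of $\varphi$-models, together with the $n$-dependent arithmetic check on that set and its running-time analysis.
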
 
\noindent In~\cite{AdlerH18} it was stated as a open problem whether this can be improved to \emph{constant} running time.\footnote{More precisely, this
was asked for relational structures instead of graphs.}
In this paper we approach this question by considering \emph{finitary} graph classes 
(\ie graph classes with a uniform upper bound on the component size).

\paragraph*{Results.}
By the classical Feferman-Vaught Theorem~\cite{Feferman1959} it is known that satisfaction 
of a first-order formula only depends on the formulas that are satisfied by the connected 
components. Our contribution goes further by showing constant time testability.

For fixed integers $c,d\in\mathbb N$, let $\mathcal \CC^c_d$ be the class of all graphs whose connected components contain at most $c$ vertices
and whose degree is bounded by $d$. 
Our main result is the following (cf.~Theorem~\ref{thm:testingFO+MOD}). 

\medskip\noindent 
\emph{Every property definable in $\fomod$ is testable on $\mathcal \CC^c_d$ with 
\emph{constant} running time in the bounded degree model.}

\medskip\noindent 
It is known that $\fo$ and $\mso$ have the same expressive power on finitary graph classes~\cite{ElberfeldGT16}, and
we obtain the following corollary  (cf.~Corollary~\ref{cor:testingCMSO}).

\medskip\noindent 
\emph{Every property definable in $\cmso$ is testable on $\mathcal \CC^c_d$ with 
\emph{constant} running time in the bounded degree model.}

\medskip\noindent  
While finitary classes are quite restricted, the results are interesting for several reasons.  
Firstly, finitary graph classes 
are exacly the classes of graphs that both have bounded \emph{treedeph}\footnote{Since we do not need \emph{treedepth} explicitly in this paper, we omit the definition. It can \eg be found in
~\cite{ElberfeldGT16}.} and bounded degree.
So this is a natural step towards resolving the open problem of~\cite{AdlerH18} stated above.
Moreover,
finitary graphs are at the core
of the definition of \emph{hyperfinite}\footnote{Since we do not need \emph{hyperfinite} graphs explicitly in this paper, we omit the definition. 
It can be found in~\cite{NewmanS13}.} graphs. Hence our result is an important
corner stone on the way to exploring constant time testability of logics on larger graph classes.

It is known that all monotone, 
hyperfinite 
graph properties are testable with constant running time~\cite{BenjaminiSchrammShapira2010}, as well as hereditary, hyperfinite properties~\cite{CzumajSS09}
(this follows from the proofs). Here \emph{monotone} means closed under taking subgraphs and \emph{hereditary} means
closed under taking induced subgraphs. It is easy to see that there are properties definable in $\fo$ that
are neither hereditary nor monotone. In fact our motivating example (cf.\ Example~\ref{exampleformula}) is such a property.

\paragraph*{Techniques.}
Our proof techniques rely on a version of Hanf normal form for $\fomod$ formulas, which is based on
a result by Nurmonen~\cite{Nurmonen00} and was further investigated in~\cite{HeimbergKS16}. We carefully tailor it to our classes 
$\mathcal \CC^c_d$ in such a way that it can be translated into speaking about isomorphism types of connected components.
This allows us to reduce testing an $\fomod$-property $\phi$ to testing for satisfaction of a
\emph{$k$-capped component histogram vector with modulo counting}. Such a histogram vector $\overline v$ is indexed by the
isomorphism types of connected components of size at most $c$. 
For each such component type $t$
the entry in $\overline v[t]$ is either an integer between $0$ and $k-1$ (\emph{rare} types), or it is of the 
form `$\geq k$ and congruent $i$ mod $m$' (\emph{frequent} types).
Here $k,i,m\in \mathbb N$ are natural numbers with $i<m$, that only depend on $\phi$. We say that a graph $G\in \mathcal C$ 
\emph{satisfies} $\overline v$ if
for each isomorphism type $t$, the number of occurrences of components in $G$ of type $t$ obeys the entry 
$\overline v[t]$ of the histogram vector $\overline v$.
Given access to an input graph $G\in\mathcal C$ on $n$ vertices (which we may assume is large), our algorithm
picks a constant number of vertices in $G$ at random, explores their surrounding connected components
and thus discovers a constant size subgraph $G'$ of $G$. 
If a component of $G'$ is rare according to $\overline v$, then the algorithm rejects. Otherwise it does an
 arithmetic check to determine whether the graph is close to satisfying $\phi$.
In the correctness proof, we check closeness by trying to `plant' components as required by the rare components and the remainders in 
$\overline v$, then we fill the remaining of the $n$ vertices with copies of a graph similar to $G'$ 
in order to preserve satisfaction of $\overline v$.
Finally, there may be a small number of vertices left to fill and we check for consistent `patchability' using
arithmetic techniques including
the Frobenius Coin Theorem (cf.~\eg\,\cite{Kannan92}).

\paragraph*{Further related work.}
In~\cite{AdlerF23} a slight relaxation of the bounded degree model was introduced, where the `closeness' measure of graphs
allows a small fraction of the \emph{vertices}
to be removed or added, in addition to the usual edge modifications. For this model, a theorem similar to Theorem~\ref{theo:AH18} 
was proven, however, with \emph{constant} running time.
This suggests that the intricacy arises when the number of vertices cannot be modified. 
This is also the case in our setting. In particular, our main algorithm (cf.~Theorem~\ref{thm:testingFO+MOD}) solves the 
issue of understanding from local information only, whether the property can be realised on the given number 
$n$ of input vertices.

In~\cite{AdlerKP21,AdlerKP24}, testability of $\fo$ on general bounded degree graphs was studied and it was shown that 
the fragment $\Sigma_2$ (\ie the fragment of $\fo$ of formulas in the $\exists^*\forall^*$-prefix class) is testable with constant
query complexity in the bounded-degree model, whereas there is a property in $\Pi_2$ (the $\forall^*\exists^*$-prefix class)
that is not testable with constant query complexity in this model.
In the dense graph model, a similar picture is known~\cite{AlonFischerKrivelevichSzegedy2000}. 
However, the proof techniques differ fundamentally.

\paragraph*{Structure of the paper.}
We begin with the basic definitions and properties in Section~\ref{sec:prelim}.
In Section~\ref{sec:hanf} we derive our tailored version of Hanf normal form for $\fomod$ and show that it can be equivalently represented by a finite number of
capped component histogram vector with modulo counting.
Our main results are presented in Section~\ref{sec:main}.
We conclude in Section~\ref{sec:conclusion}.

\section{Preliminaries}\label{sec:prelim}

We will use $\mathbb N$ to denote the natural numbers including $0$ and $\Z^+:=\mathbb N\setminus\{0\}$. We let
$[n]:=\{1,\ldots,n\}$.

\paragraph*{Graphs}

A graph is a pair $G=(V(G),E(G))$ where $V(G)$ is a finite sets and $E(G)\subseteq \{e\in\P(V(G))\mid \abs{e}=2\}$. Elements of $V(G)$ are called \emph{vertices} (of $G$) and elements of $E(G)$ are called \emph{edges} (of $G$). An edge $\{x,y\}$ may also be denoted $xy$. If $xy\in V(G)$ we say that $x$ and $y$ are \emph{adjacent} or that $x$ is a \emph{neighbour} of $y$. A graph $H$ is a \emph{subgraph} of a graph $G$ if $V(H) \subseteq V(G)$ and $E(H) \subseteq E(G)$. 
For a graph $G$ and $S\subseteq V(G)$ we let $G[S]:=(S, \{\{v,w\}\in E(G): v,w\in S \})$ be the subgraph of $G$ \emph{induced} by $S$. 
A subgraph $H$ of $G$ is an \emph{induced subgraph} of $G$ if $H=G[S]$ for some set $S\subseteq V(G)$. For vertices $v,w\in V(G)$ we say that $v$ is a \emph{neighbour} of $w$ if $\{v,w\}\in E(G)$.
A \emph{path} of length $n$ from $p_0$ to $p_n$ is a graph $P=(V(P),E(P))$ with $V(P)=\{p_0,p_1,\dots,p_n\}$ and $E(P)=\{p_0p_1,p_1p_2,\dots,p_{n-1}p_{n}\}$ such that $p_0\neq p_n$. A subset $X\subseteq V(G)$ is called \emph{connected} if for every pair of distinct vertices $x,y\in X$, $G$ contains a path from $x$ to $y$ as a subgraph. The \emph{distance} $\dist_G(x,y)$ between vertices $x$ and $y$ in $G$ is the length of the shortest path from $x$ to $y$.
For a graph $G$, a vertex $v\in V(G)$ and $r\in \mathbb N$ we let $N_G(v,r):=\{u\in V(G)\mid \dist_G(v,u)\leq r\}.$ 
A subset $X\subseteq V(G)$ is a \emph{connected component} of $G$ if it is connected and no set $Y$ satisfying $X \subsetneq Y \subseteq V(G)$ is connected. A graph $G$ is connected if $V(G)$ is connected.
An \emph{isomorphism} from graph $G$ to graph $H$ is a bijective function $f:V(G)\to V(H)$ such that $\{u,v\}\in E(G)$ if and only if $\{f(u),f(v)\}\in E(H)$. Two graphs are called \emph{isomorphic} if there exists an isomorphism between them. A \emph{graph property} is a class of graphs which is closed under isomorphism. We say a graph $G$ has property $P$ if $G\in P$. If $P$ is a property and $C$ is a class of graphs such that $P\subseteq C$, we say that $P$ is a \emph{property on} $C$. A \emph{monotone} property is a property closed under taking subgraphs. A \emph{hereditary} property is a property closed under taking induced subgraphs.

The \emph{degree} of $v\in V(G)$ is the number of vertices adjacent to $v$ in $G$. The degree of a graph $G$ is the maximum degree of a vertex in $V(G)$. A class of graphs has $d$-bounded degree if every graph in the class has degree at most $d$. The class of all $d$-bounded degree graphs is denoted $\CC_d$.

A \emph{rooted graph} is a pair $(G,v_G)$ where $G$ is a graph $v_G\in V(G)$ is the \emph{root}. An \emph{isomorphism} from rooted graph $G$ to rooted graph $H$ is an isomorphism $f:V(G)\to V(H)$ such that $f(v_G) = v_H$. 
The \emph{radius} of a rooted graph is the smallest $r\in\N$ such that such that each vertex has distance at most $r$ from the root. A rooted graph of radius $r$ is also called an \emph{$r$-ball}.
Let $G$ be a graph and $v\in V(G)$. 
The \emph{$r$-ball} in $G$ \emph{around} $v$ is the rooted graph $\B_G(v,r):=(G[N_G(v,r)]),v)$. 
The number of possible non-isomorphic $r$-balls in $d$-bounded degree graphs is finite and denoted by $N(r,d)$. 
The set of \emph{isomorphism types} of $r$-balls in $d$-bounded degree graphs is denoted by $\Tball(r,d)\coloneq\{\tau_1,\tau_2,\dots,\tau_{N(r,d)}\}$. Note that if $r_1<r_2$, then $\Tball(r_1,d) \subset \Tball(r_2,d)$.

The \emph{ball histogram vector} $\bhv_r(G)$ of a $d$-bounded degree graph $G$ is a vector of length $N(r,d)$ whose $i$th entry is the number of vertices $v$ of $G$ such that $\B_G(v,r)$ has isomorphism type $\tau_i$. The \emph{ball distribution vector} $\bdv_r(G)$ is obtained from $\bhv_r(G)$ by dividing each entry of $\bhv_r(G)$ by $\abs{V(G)}$.
For a vector $\ol{u}$ of length $n$ we define $\cab{\ol{u}} \coloneq \sum_{i=1}^{n} \abs{\ol{u}[i]}$.
 
For a graph class $\mathcal C$ and a constant $c\in \mathbb N$ we say that $\mathcal C$ 
is \emph{$c$-finitary}, if all connected components of the graphs in $\mathcal C$ have size at most $c$.
We say that a class $\mathcal C$ is \emph{finitary}, if there is a $c\in\mathbb N$ such that $\cal C$ is $c$-finitary. The class of all $c$-finitary graphs is denoted $\CC^c$. Further $\CC^c_d \coloneq \CC^c \cap \CC_d$.

The number of possible non-isomorphic connected components of $d$-bounded degree, $c$-finitary graphs is finite and denoted $M(c,d)$. The set of $d$-bounded degree, $c$-finitary \emph{component (isomorphism) types} is denoted $T_\textnormal{comp}(c,d)\coloneq\{t_1,t_2,\dots,t_{M(c,d)}\}$. For a component type $t$, $\abs{t}$ denotes the number of vertices of any graph of that type.

The \emph{component histogram vector} $\chv_c(G)$ of a $d$-bounded degree, $c$-finitary graph $G$ is a vector of length $M(c,d)$, whose $i$th entry is the number of connected components of $G$ which are isomorphic to $t_i$. Let $k\in\N$. A \emph{$k$-capped component histogram vector} ($k$-CCHV) $\ol{a}$ is a vector of length $M(c,d)$ with entries from the set 
$\{0,1,2,\dots,k-1\}\cup\{(j,\ell)\mid j,\ell \in \mathbb N\text{ and }j<\ell\}$. A $d$-bounded degree, $c$-finitary graph $G$ is said to \emph{satisfy} a $k$-capped component histogram vector $\ol{a}$ if for all $ i \in\{1,\dots,M(c,d)\}$, either $\ol{a}[i]=\chv_c(G)[i]$, or $\ol{a}[i]=(j,\ell)$, $\chv_c(G)[i]\geq k$ and 
$\chv_c(G)[i]\equiv j(\Mod\ell)$. For a $k$-CCHV $\ol{a}$ and component type $t$, we may use the notation $\ol{a}[t]$ to denote $\ol{a}[i]$ where $i\in\{1,\dots,M(c,d)\}$ is such that $t=t_i$.
For each ball type $\tau\in\Tball(r,d)$ with $c$ vertices we denote by $\underline{\tau}$ its \emph{underlying} component type $\underline{\tau}\in\Tcomp(c,d)$, the component type isomorphic to an $r$-ball of type $\tau$ with the root forgotten.

\begin{remark} Let $c,d\in\N$ and let $G$ be a $d$-bounded degree, $c$-finitary graph. Algorithm~\ref{algo:histconversion}
	computes $\chv_c(G)$ from $\bhv_c(G)$ in constant time.
\end{remark}
	\begin{algorithm}
		\tbf{input} ball histogram vector $\bhv_c(G)$\;
		$\chv_c(G) \gets 0$\;
		\For{$j\in [M(c,d)]$}{
			choose a representative $H\in t_j$\;
			choose any $v\in H$\;
			let $\tau_i$ be the isomorphism type of $\B_{H}(v,c)$\;
			$\chv_c(G)[j] \gets \frac{\bhv_c(G)[i]}{\bhv_c(H)[i]}$\;
		}
		\Return $\chv_c(G)$
		\caption{$\bhv_c$ to $\chv_c$}\label{bhvtochv}
		\label{algo:histconversion}
	\end{algorithm}

\paragraph*{Logic}
\emph{First-order logic} (FO) formulas on graphs are built from predicates for the edge relation and equality, using Boolean connectives $\land,\lor,\neg$ as well as existential and universal quatifiers $\exists,\forall$, where the variables represent graph vertices. A \emph{sentence} of FO is a formula in which every variable is in the scope of a quantifier. The semantics of FO are defined in the usual way. If a graph $G$ satisfies a sentence $\phi$ we write $G\models\phi$. A sentence $\phi$ of FO \emph{defines} a graph property $P_\phi$ if the members of the property are exactly the graphs which satisfy $\phi$.

\begin{example}\label{exampleformula}
The formula \[\phi=\exists x \Bigl(\forall y \bigl(\neg Exy\bigr)\Bigr) \land \exists x\exists y\Bigl(Exy \land \forall z \bigl((Exz \to z=y) \land(Eyz \to z=x)\bigr)\Bigr)\] defines the property of graphs which contain both an isolated vertex and an isolated edge as connected components. Its negation $\neg\phi$ will be our running example.
\end{example}

\emph{First-order logic with counting} ($\fomod$) extends $\fo$ by modular counting quantifiers $\exists^{j(\mod m)}x$
for a variable $x$ and $j,\ell\in\N$ and $j<\ell$.

\emph{Monadic second-order logic} ($\mso$) extends $\fo$ by also additionally allowing quantification over subsets of vertices.
Formally, we have two types of variables, \emph{individual variables} (denoted by small letters $x,y,z,x_1,\ldots$) which are interpreted by vertices, and \emph{set variables} (denoted by capital 
letters $X,Y,Z,X_1,\ldots$) which are interpreted by subsets of vertices. In addtion to the atomic first-order formula we also have the formula $Xx$ saying $x$ is an element of set $X$. Furthermore, we
have universal and existential quantification both over individual variables and set variables.
\emph{Monadic second-order logic with counting} ($\cmso$) extends $\mso$ by modular counting quantifiers $\exists^{i(\mod m)}x$
for an individual variable $x$ and $j,\ell\in\N$ and $j<\ell$.

\paragraph*{Property testing}
We will use the bounded-degree model for property testing introduced in~\cite{GoldreichR02}. A property testing algorithm $\mathcal{A}$ does not have unrestricted access to the input graph $G$. Instead it recieves $n\coloneq\abs{V(G)}$ and the labels $1,\dots,n$ of the vertices of $G$. We assume $\mathcal{A}$ may make queries to an \emph{oracle} which on a query $(i,j)$ for $i\in\{1,\dots,n\}$ and $j\in\{1,\dots,d\}$, returns the $j$th neighbour of the vertex $i$, or $\perp$ if $i$ has less than $j$ neighbours.
We assume that oracle queries are answered in constant time. 
An \emph{edge modification} to a graph $G$ is the addition or deletion of an edge between any pair of vertices in $V(G)$. The \emph{distance} $\dist(G,H)$ between graphs $G$ and $H$ with equal numbers of vertices is the minimum number of edge modifications needed to make $G$ isomorphic to $H$.
Let $0<\eps\leq 1$. Graphs $G$ and $H$ with $n$ vertices and $d$-bounded degree are called \emph{$\eps$-close} if $\dist(G,H)\leq\eps d n$. Otherwise they are called \emph{$\eps$-far}.
We say that a graph $G$ is $\eps$-close to a property $P$ if it is $\eps$-close to some graph $H\in P$. Otherwise it is $\eps$-far from $P$.
If $\ol{u_1}$ and $\ol{u_2}$ are distribution vectors 
of the same length, we call them \emph{$\eps$-close} if $\cab{\ol{u_1}-\ol{u_2}}\leq\eps$ otherwise we call them \emph{$\eps$-far}.

\begin{definition}
Fix $d\in \mathbb N$, let $0<\eps\leq 1$ and let $P$ be a property on $\mathcal C_d$, \ie $P\subseteq \mathcal C_d$. 
	An \emph{$\eps$-tester} for $P$ is an algorithm with input $n=\abs{V(G)}$ and oracle access to $G\in \mathcal C_d$ which does the following.
\begin{enumerate}
	\item If $G\in P$, accepts with probability $\geq\frac{2}{3}$
	\item If $G$ is $\eps$-far from $P$, rejects with probability $\geq\frac{2}{3}$.
\end{enumerate}
\end{definition}

The \emph{query complexity} of an $\eps$-tester is the maximum number of queries the tester makes to the oracle as a function of $n$. A property $P$ is called \emph{testable} if for every $0<\eps\leq1$ and $n\in\N$ there exists an $\eps$-tester for the property $P_n\coloneq\{G\in P \mid \abs{G}=n\}$ with constant query complexity. 

	We are interested in $\eps$-testers that are uniform in $n$ and in their \emph{running time}, where the running time 
	is defined in the usual way as a function of $n$. 
	A property $P$ is (uniformly) \emph{testable with constant} running time, if for every $0<\eps\leq1$ there exists an $\eps$-tester for the property $P$
	with constant running time (and hence constant query complexity).

\paragraph*{Model of computation.} We use Random Access Machines (RAMs) and a uniform cost measure
when analysing our algorithms; \ie we assume all basic arithmetic operations including random
sampling can be done in constant time, regardless of the size of the numbers involved.

We consider a small example.

\begin{example}\label{ex:closeness}
Let $\mathcal C^2_1$ be the class of all $2$-finitary graphs with degree at most $1$ and let
	$\psi=\neg\phi$, where $\phi$ is the $\fo$ sentence in Example \ref{exampleformula}. The property $P_\psi$ contains exactly the graphs which do \tbf{not} contain both an isolated vertex and an isolated edge. Consider the graphs
	$G^{2n+1}$ and $G^{2n}$ on $2n+1$ and $2n$ vertices, respectively, shown in Figure~\ref{fig:test}. 
	Clearly $G^{2n}\models\psi$ and $G^{2n+1}\not\models\psi$. Moreover, for sufficiently small $\epsilon$,  $G^{2n+1}$ is $\epsilon$-far 
from satisfying $\psi$ as in order to satisfy $\psi$ every edge must be deleted.\footnote{Here one might argue that
	farness of $G^{2n+1}$ is an artificial artefact due to the choice of the model, as we are not allowed to simply add an edge and exceed the degree bound. 
However, we could always add a sentence
to our formula imposing the degree of at most $1$.} Moreover, 
with high probability sampling a constant number of the vertices of either graph and querying their neighbourhoods will only detect isolated edges, 
	so sampling alone cannot distinguish $G^{2n+1}$ from $G^{2n}$. Hence the tester will have to make use of knowing the number of
	vertices.

	We will show constant time testability of $P_\psi$ on $\mathcal C^2_1$ in Example~\ref{ex:final}, illustrating the proof techniques of our main theorem. 	
	Note that $P_\psi$ is neither monotone nor hereditary, so constant time testability 
	does not follow from known results~\cite{BenjaminiSchrammShapira2010,CzumajSS09}.
	
\end{example}

\begin{figure}
\begin{center}
\setlength{\fboxsep}{0pt}
\fbox{\begin{tikzpicture}[vnode/.style={circle,draw=black,fill=black,minimum size=1mm,inner sep=2pt}]
	\node (0) at (3.5,-1) {$G^{2n+1}$};
	
	\node at (0.5,0) {};

	\node[vnode] (1) at (1,0) {};

	\node[vnode] (3) at (2,0.4) {};
	\node[vnode] (4) at (2,-0.4) {};
	\node[vnode] (5) at (3,0.4) {};
	\node[vnode] (6) at (3,-0.4) {};
	
	\draw[dotted,thick] (4,0)--(5,0);
	
	\node[vnode] (7) at (6,0.4) {};
	\node[vnode] (8) at (6,-0.4) {};
	
	\draw[thick] (3)--(4);
	\draw[thick] (5)--(6);
	\draw[thick] (7)--(8);
	
	\draw[dotted,thick] (4,0)--(5,0);
	
	\draw (7,1) -- (7,-1.3);
	
	\node (zero) at (10.5,-1) {$G^{2n}$};
	
	\node at (13.5,0) {};

	\node[vnode] (a) at (8,0.4) {};
	\node[vnode] (b) at (8,-0.4) {};
	\node[vnode] (c) at (9,0.4) {};
	\node[vnode] (d) at (9,-0.4) {};
	\node[vnode] (e) at (10,0.4) {};
	\node[vnode] (f) at (10,-0.4) {};
	
	\draw[dotted,thick] (11,0)--(12,0);
	
	\node[vnode] (g) at (13,0.4) {};
	\node[vnode] (h) at (13,-0.4) {};
	
	\draw[thick] (a)--(b);
	\draw[thick] (c)--(d);
	\draw[thick] (e)--(f);
	\draw[thick] (g)--(h);
\end{tikzpicture}}
\end{center}
\caption{The graphs $G^{2n+1}$ and $G^{2n}$ of Example~\ref{ex:closeness}.}
\label{fig:test}
\end{figure}

\smallskip
We will use the following two lemmas from~\cite{NewmanS13}.
\begin{lemma}[\cite{NewmanS13}]\label{freqclosegraphclose} 
Let $c\in\Z^+$ and $0<\lambda<1$. Let $G_1$ and $G_2$ be $c$-finitary graphs on $n$ vertices with degree bound $d$.
If $\cab{\bdv_c(G_1)-\bdv_c(G_2)}\leq\lambda$, then $G_1$ and $G_2$ are $\lambda$-close.
\end{lemma}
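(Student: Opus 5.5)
We argue by explicitly rewiring $G_1$ into a graph isomorphic to $G_2$, matching vertices whose $c$-balls have the same isomorphism type. Since both graphs are $c$-finitary, for every vertex $v$ the ball $\B_{G_i}(v,c)$ is exactly the connected component of $v$, rooted at $v$. So the ball type of $v$ determines its component type together with the position of $v$ inside the component.

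First, for each ball type $\tau\in\Tball(c,d)$ let $h_i(\tau):=\bhv_c(G_i)[\tau]$, and note that
\[
\sum_\tau \abs{h_1(\tau)-h_2(\tau)} = n\cdot\cab{\bdv_c(G_1)-\bdv_c(G_2)}\le \lambda n .
\]
Next we pass to components. For a component type $t$, the vertices lying in components of type $t$ split into ball types, and each such ball type $\tau$ with $\underline\tau=t$ occurs exactly $\chv_c(G_i)[t]\cdot m_{t,\tau}$ times, where $m_{t,\tau}\ge1$ is the number of vertices of $t$ whose rooted component has type $\tau$ (this is the observation behind Algorithm~\ref{algo:histconversion}). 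Hence
\[
\abs{\chv_c(G_1)[t]-\chv_c(G_2)[t]}\cdot\abs{t} = \sum_{\tau:\underline\tau=t}\abs{h_1(\tau)-h_2(\tau)} ,
\]
and therefore $\sum_t \abs{t}\cdot\abs{\chv_c(G_1)[t]-\chv_c(G_2)[t]}\le\lambda n$.

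Now construct the modification. For each type $t$, keep $\min(\chv_c(G_1)[t],\chv_c(G_2)[t])$ components of $G_1$ of type $t$ untouched. The remaining "excess" components of $G_1$ cover some vertex set $X$ with $\abs{X}\le \sum_t \abs{t}\cdot\max(0,\chv_c(G_1)[t]-\chv_c(G_2)[t])$. Since both graphs have $n$ vertices, $\abs{X}$ equals the total size of the "missing" components of $G_2$.

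On $X$, delete all edges; there are at most $d\abs{X}/2$ of them. Then partition $X$ arbitrarily into blocks of the sizes of the missing components and insert their edges; again there are at most $d\abs{X}/2$. The result is isomorphic to $G_2$, and the total number of modifications is at most $d\abs{X}$.

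The obstacle is the constant. We have $\abs{X}$ bounded by the positive part of the differences, while the full $\ell_1$ sum is at most $\lambda n$, and the positive and negative parts are equal because both graphs have $n$ vertices. So $\abs{X}\le\lambda n/2$, and the number of modifications is at most $d\lambda n/2\le\lambda d n$. This gives $\lambda$-closeness.
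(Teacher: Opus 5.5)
Your proof is correct and complete. Identifying each $c$-ball with the rooted component, converting $n\cdot\cab{\bdv_c(G_1)-\bdv_c(G_2)}$ into $\sum_t \abs{t}\cdot\abs{\chv_c(G_1)[t]-\chv_c(G_2)[t]}$, and then deleting and rebuilding only the surplus components is the standard direct argument for this lemma. The paper does not prove it but imports it from \cite{NewmanS13}. Your observation that the surplus and deficit parts are equal even gives the slightly stronger conclusion that $G_1$ and $G_2$ are $\lambda/2$-close.
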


	Let $r,s\in\Z^+$. For an input graph $G$ of degree at most $d$, Algorithm \ref{estfreq} samples $s$ vertices of $G$ uniformly at random and returns an estimation of the frequencies of $r$-ball types in $G$.

\smallskip
\begin{algorithm}[H]
\caption{\textsc{EstimateFrequencies}($G$,$r$,$s$)}\label{estfreq}
	\smallskip
	$\widetilde{\bdv}_r(G) \gets 0$\;
	sample $s$ vertices $u_1,\dots,u_s$ uniformly at random\;
	\For{$j\in\{1,\dots,s\}$}{
		explore the $r$-ball at $u_j$\;
		let $i$ be the index of the isomorphism type $\tau_i$ which is isomorphic to this $r$-ball\;
		$\widetilde{\bdv}_r(G)[i] \gets \widetilde{\bdv}_r(G)[i]+\frac{1}{s}$\;
	}
	\Return $\widetilde{\bdv}_r(G)$
\end{algorithm}

\begin{lemma}[\cite{NewmanS13}]\label{freqestlemma} 
For a $d$-bounded degree graph $G$, $r,q\in\Z^+$ and $0<\lambda<1$, if
\[q\geq\frac{N(r,d)^2}{\lambda^2}\cdot\ln(N(r,d)+40),\]
	then the vector $\widetilde{\bdv}_r(G)$ returned by \textsc{EstimateFrequencies}$(G,r,q)$ 
	satisfies that \linebreak $\cab{\widetilde{\bdv}_r(G) - \bdv_r(G)}\leq\lambda$ with probability $19/20$.
\end{lemma}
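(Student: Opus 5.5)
This is a standard Chernoff/Hoeffding sampling bound combined with a union bound over the $N(r,d)$ ball types. Write $N:=N(r,d)$. For each type $\tau_i$, let $p_i:=\bdv_r(G)[i]$, the fraction of vertices of $G$ whose $r$-ball has type $\tau_i$.

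Since \textsc{EstimateFrequencies} samples $u_1,\dots,u_q$ independently and uniformly, the indicator variables $X_{j,i}:=[\B_G(u_j,r)\cong\tau_i]$ are i.i.d.\ Bernoulli with mean $p_i$. Moreover, $\widetilde{\bdv}_r(G)[i]=\frac1q\sum_j X_{j,i}$.

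The plan is to demand per-coordinate accuracy $\lambda/N$. If every coordinate satisfies $\abs{\widetilde{\bdv}_r(G)[i]-p_i}\le \lambda/N$, then summing over the $N$ coordinates gives $\cab{\widetilde{\bdv}_r(G)-\bdv_r(G)}\le\lambda$. For a fixed $i$, Hoeffding's inequality yields
\[\Pr\bigl[\abs{\widetilde{\bdv}_r(G)[i]-p_i}>\lambda/N\bigr]\le 2\exp\bigl(-2q\lambda^2/N^2\bigr).\]
A union bound over the $N$ coordinates bounds the total failure probability by $2N\exp(-2q\lambda^2/N^2)$.

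It remains to check that this is at most $1/20$ for the given $q$. Substituting $q\ge \frac{N^2}{\lambda^2}\ln(N+40)$ gives the bound $2N(N+40)^{-2}$. We need $2N(N+40)^{-2}\le 1/20$, i.e.\ $40N\le (N+40)^2$. This holds for all $N\ge 1$, because $(N+40)^2\ge 80N\ge 40N$ by AM–GM.

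None of these steps is a genuine obstacle. The only point needing care is matching the constants, namely the per-coordinate accuracy $\lambda/N$ and the factor $2$ in Hoeffding's inequality, so that the union-bound estimate fits under $1/20$ with the stated $\ln(N+40)$. The check above confirms it does, with slack. If a cleaner route is preferred, the multiplicative Chernoff bound on the total-variation deviation would give a sharper bound, but it is not needed here.
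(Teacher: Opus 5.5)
Your proof is correct. The paper gives no proof of its own and simply cites Newman and Sohler, and your argument (Hoeffding per coordinate at accuracy $\lambda/N(r,d)$, then a union bound over the $N(r,d)$ ball types) is the standard proof behind that result. Your constant check $2N(N+40)^{-2}\le 1/20$ is also valid; in fact the left-hand side never exceeds $1/80$.
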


\paragraph*{Frobenius Coin Theorem}

We will apply the following results of number theory to the question of whether a graph on a given number of vertices can be created as a disjoint union of pre-described connected component types. A \emph{conical combination} of integers $a_1,\dots,a_n$ is a sum $\sum_{i=1}^{n}a_i b_i$ for some natural numbers $b_1,\dots,b_n$. A collection of integers is \emph{setwise coprime} if their greatest common divisor is 1.
For integers $a_1,\dots,a_n$ we write $\gcd(a_1,\dots,a_n)$ and $\lcm(a_1,\dots,a_n)$ to denote their greatest common divisor and lowest common multiple respectively.

\begin{theorem}[\cite{Kannan92}]\label{fct}
For every setwise coprime subset of the positive integers $\{a_1,\dots,a_n\}$ there exists a largest integer which is not expressible as a conical combination of $a_1,\dots,a_n$. This integer is called the \textit{Frobenius number}.
\end{theorem}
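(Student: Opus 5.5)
The plan is to show that the set $S$ of conical combinations of $a_1,\dots,a_n$ contains every sufficiently large integer. The set of non-representable integers then is nonempty and bounded above, so it has a largest element. Nonemptiness is immediate: every conical combination of positive integers is $\geq 0$, so $-1\notin S$. The whole argument therefore reduces to finding a threshold $N_0$ with $\{N\in\mathbb Z\mid N\geq N_0\}\subseteq S$. Note that $S$ is closed under addition and contains $0$.

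First I would use setwise coprimality through Bézout's identity. Since $\gcd(a_1,\dots,a_n)=1$, there are integers $x_1,\dots,x_n$ with $\sum_{i=1}^n a_i x_i = 1$. I split the indices by the sign of $x_i$ and set
\[
s \coloneq \sum_{i:\,x_i<0} a_i\,(-x_i), \qquad t \coloneq \sum_{i:\,x_i\geq 0} a_i\,x_i .
\]
Both $s$ and $t$ lie in $S$, since their coefficients are natural numbers, and $t-s=1$. So $S$ contains two consecutive integers $s$ and $s+1$. If $s=0$, then $1\in S$ and every $N\geq 0$ is a multiple of $1$, so $N_0=0$ works and the Frobenius number is $-1$.

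In the case $s\geq 1$, I claim that every $N\geq s(s-1)$ lies in $S$. Write $N=qs+r$ with $0\leq r<s$. Then $q\geq s-1\geq r$, so $q-r\geq 0$ and
\[
N=(q-r)\,s + r\,(s+1).
\]
Since $s,s+1\in S$, $S$ is closed under addition, and the coefficients $q-r$ and $r$ are natural numbers, this gives $N\in S$.

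Consequently every integer not in $S$ is smaller than $s(s-1)$. The set of such integers contains $-1$, so it is nonempty and bounded above, and its maximum is the Frobenius number.

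The only step with real content is producing two consecutive elements of $S$ from coprimality. Splitting the Bézout coefficients by sign is the key trick there, and after it everything is elementary division with remainder. I do not expect any genuine obstacle. The edge cases ($s=0$, and possibly $n=1$ with $a_1=1$) are covered by the same argument.
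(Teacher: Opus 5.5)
Your proof is correct. Since the paper gives no proof of this theorem and simply cites Kannan's work, your argument is a genuinely different route: a self-contained elementary derivation in place of a citation.

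The reference Kannan92 is mainly concerned with computing the Frobenius number efficiently; the existence statement itself is classical folklore. Your argument has three steps:
\begin{enumerate}
\item Split a B\'ezout relation $\sum_i a_i x_i = 1$ by the sign of the coefficients to obtain two consecutive representable integers $s$ and $s+1$.
\item Use division with remainder, $N = (q-r)s + r(s+1)$, to show that every $N \geq s(s-1)$ is representable.
\item Observe that $-1$ is never representable, so the set of non-representable integers is nonempty and bounded above and therefore has a maximum.
\end{enumerate}
All steps check out, including $q \geq s-1 \geq r$ and the degenerate case $s=0$, which gives Frobenius number $-1$.

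What your route buys beyond self-containment is an explicit upper bound $s(s-1)-1$ on the Frobenius number. It is crude and depends on the chosen B\'ezout coefficients, but it is finite. This is in fact all the paper's main proof needs. In the proof of Theorem~\ref{thm:testingFO+MOD}, $F$ enters only through the threshold $n_0$ and the definition of $y'$, and any upper bound $F' \geq F$ works equally well there. The value $\nfill$ exceeds $F'$, so it is representable once it is a multiple of $g$. Maximality of $y'$ still gives $\nfill \leq F' + \lcm(b_1,\dots,b_\ell)\cdot\abs{G_0}$.

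Via the rescaling in Corollary~\ref{fct+}, your bound transfers to the non-coprime case. The citation buys brevity and points to sharper results, such as efficient computation of the exact Frobenius number.
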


\begin{corollary}\label{fct+}
There exists a function $F:\P(\Zpos) \to \Zpos$ which maps $\{a_1,\dots,a_n\}$ to the largest multiple of $g\coloneq\gcd(a_1,\dots,a_n)$ which is not expressible as a conical combination of $a_1,\dots,a_n$.
\begin{proof}
Notice that if $\gcd(a_1,\dots,a_n)=g$, then the integers $\frac{1}{g}a_1,\dots,\frac{1}{g}a_n$ are setwise coprime and thus by the previous theorem there exists a largest integer $x$ not expressible as
$\sum_{i=1}^{n}b_i\left(\frac{a_i}{g}\right)$
	for any $b_1,\dots,b_n\in\N$. It follows that $g x$ is \emph{not} expressible as a conical combination of $a_1,\dots,a_n$. 
	Further for any integer $y>x$, the product $g y$ \emph{is} expressible as a conical combination of $a_1,\dots,a_n$ as 
	 $y>x$ is a conical combination of $\frac{1}{g}a_1,\dots,\frac{1}{g}a_n$.
\end{proof}
\end{corollary}

\section{From Hanf normal form to component histogram vectors}\label{sec:hanf}

Let $r\in\N$ and $\tau\in T(r,d)$. There is an $\fomod$ formula $\ball^r_\tau(x)$ such that $G \models \ball^r_\tau(v)$ iff $\B_G(v,r)$ is isomorphic to $\tau$. A \emph{Hanf sentence} is a sentence of either the form

\[\exists^{\geq m}x(r,\tau) \coloneq \exists x_1 \exists x_2, \dots, \exists x_m \left(\bigland_{1 \leq i<j\leq m} x_i \neq x_j \land \bigland_{1 \leq i \leq m} \ball^r_\tau(x_i)\right)\]
or the form
$\exists^{j(\Mod \ell)}x(r,\tau)\coloneq \left(\exists^{j(\Mod \ell)}x \left(\ball^r_\tau(x)\right) \right)$

for some $r\in\N$ and $\tau\in \Tball(r,d)$.
We also define the shortcut 

$\exists^{=m}x(r,\tau) \coloneq \left(\exists^{\geq m}x(r,\tau)) \land \neg(\exists^{\geq m+1}x(r,\tau)\right)$.

A sentence of FO is said to be in \emph{Hanf normal form} if it is a Boolean combination of Hanf sentences.
The \emph{radius} and \emph{type} of the sentence $\exists^{\geq m}x(r,\tau)$, $\exists^{j(\Mod \ell)}x(r,\tau)$ or $\exists^{=m}x(r,\tau)$ are $r$ and $\tau$ respectively.
For FO-sentences $\phi$ and $\psi$ we write $\phi \equiv_d \psi$ if $P_\phi \cap \CC_d = P_\psi \cap \CC_d$ and $\phi \equiv^c_d \psi$ if $P_\phi \cap \CC^c_d = P_\psi \cap \CC^c_d$.

\begin{theorem}[Hanf's Locality Theorem for $\fomod$ \cite{HeimbergKS16}]\label{hanflocalityfo}
Let $d\in\N$. For every sentence $\phi$ of $\fomod$ and every $d\in\N$ there is an $\fomod$ sentence $\psi$ in Hanf normal form such that $\phi \equiv_d \psi$.
\end{theorem}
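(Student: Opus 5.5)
This theorem is cited from \cite{HeimbergKS16}, building on Nurmonen~\cite{Nurmonen00}. If I had to prove it myself, I would follow the classical Hanf locality approach via Ehrenfeucht--Fra\"iss\'e games adapted to modulo counting. The overall plan is to show that on $\CC_d$, whether a graph satisfies a $\fomod$ sentence $\phi$ of quantifier rank $q$ (with all moduli dividing some $M$) depends only on a bounded amount of information about its $r$-ball types. Here $r$ depends only on $q$. The information needed is: for each $\tau\in\Tball(r,d)$, the number of realisations of $\tau$ capped at some threshold $t$, together with that number modulo $M'$ for a suitable $M'$ depending on $M$ and $q$.

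Once this is established, the normal form follows immediately. Since $d$ is fixed, $\Tball(r,d)$ is finite. Hence there are only finitely many possible tuples of this data, and each tuple is described by a conjunction of Hanf sentences of the forms $\exists^{=m}x(r,\tau)$, $\exists^{\geq t}x(r,\tau)$ and $\exists^{j(\Mod M')}x(r,\tau)$. The sentence $\psi$ is then the disjunction of those conjunctions whose tuples are realised by some model of $\phi$ in $\CC_d$. The realising models are well defined because, by the key step, satisfaction of $\phi$ depends only on the tuple.

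The key step is a game argument. First I would define the Ehrenfeucht--Fra\"iss\'e game for $\fomod$ as in Nurmonen: in a counting move, Spoiler picks a set of elements in one structure, and Duplicator must answer with a set in the other structure whose size is congruent modulo $M$ (or both sizes are large). Spoiler then picks an element of Duplicator's set, and Duplicator must answer within Spoiler's original set. Next I would prove by induction on the number of remaining rounds that Duplicator wins, using the usual Hanf strategy of preserving isomorphism of neighbourhoods around the chosen tuples with radii shrinking geometrically, $r_i = 3r_{i+1}+1$. Far and near moves are handled by partitioning elements according to the ball type of radius $r_{i+1}$ and by whether they lie close to previously chosen points.

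The main obstacle is the counting move. Here Duplicator must match the set sizes class by class: elements near the current tuple are handled directly via the isomorphism of the larger neighbourhoods, and elements far from the tuple are grouped by ball type. For each type, the number of available realisations in the two graphs must agree either exactly or modulo $M$, or both must exceed a threshold. This is exactly where the capped counts and the residues of the ball-type counts are used. One subtlety is that realisations near the tuple have to be subtracted off. Because the degree is bounded, there are only boundedly many such realisations, and they are matched via the neighbourhood isomorphism. Consequently the remaining far counts still agree modulo $M'$ or beyond the threshold, which allows Duplicator to pick a bijectively compatible set of the required size.
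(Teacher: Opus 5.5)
\textbf{The gap is in the counting move.} The paper does not prove this theorem; it imports it from \cite{HeimbergKS16}, which builds on \cite{Nurmonen00}. Your self-contained route has the right architecture: finitely many ball types, capped counts plus residues, a finite disjunction over realised data, and a shrinking-radii Duplicator strategy that separates near from far elements. But the counting move as you define it makes the game unsound for modulo counting. In your version Spoiler may only pick inside Duplicator's set $Y$, and Duplicator answers inside Spoiler's set $X$.

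Take $G$ to be $2$ isolated vertices and $H$ to be $3$ isolated vertices. The rank-$1$ sentence $\exists^{0(\Mod 2)}x\,(x=x)$ separates them. Yet Duplicator wins your one-round game: to any $X$ she answers with a set of the same parity that is nonempty iff $X$ is, and any two single vertices form a partial isomorphism. The soundness induction breaks exactly when Duplicator answers with a proper subset of $\psi(H)$, because then no element of $Y$ witnesses the difference. Consistently with this, Duplicator never actually uses the residues of the type counts in your game. Capped counts with a threshold above $M$ plus the near counts already let her choose, per type, a number of elements with the right residue. So the key step as set up would establish a false claim, namely that satisfaction depends on capped counts alone. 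For the same reason, ``or both sizes are large'' cannot replace congruence in the game rule.

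\textbf{The fix is Nurmonen's actual rule.} After Duplicator answers $Y$ with $\abs{Y}\equiv\abs{X}\ (\Mod M)$, Spoiler may pick \emph{any} element $b$ of the structure containing $Y$. Duplicator must answer with some $a$ such that $a\in X\iff b\in Y$, besides maintaining the neighbourhood invariant. This is sound: if $Y\neq\psi(H)$, Spoiler picks $b\in Y\triangle\psi(H)$.

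\textbf{What this changes in your strategy.} For each far $r_{i+1}$-type $\tau$, let $F_G(\tau)$ be its set of far realisations. Duplicator now has to match both $X\cap F_G(\tau)$ and its complement in $F_G(\tau)$. If $X=F_G(\tau)$, she is forced to take $Y=F_H(\tau)$, so $\abs{F_G(\tau)}\equiv\abs{F_H(\tau)}\ (\Mod M)$ is genuinely required. This congruence holds because the total counts are congruent and the near counts coincide via the neighbourhood isomorphism, so $M'=M$ already suffices. In the mixed case she needs some $y$ with $0<y<\abs{F_H(\tau)}$ and $y\equiv\abs{X\cap F_G(\tau)}\ (\Mod M)$. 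That forces the threshold to exceed $M$ plus the maximal number of near realisations.

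Consequently, the hypothesis on type counts must be the conjunction ``equal, or both $\geq t$ and congruent modulo $M$'', not the disjunction in your last paragraph. Capped counts and residues for the smaller radii $r_{i+1}$ are determined by those for the largest radius. With these corrections your argument goes through.
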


We tailor Hanf's theorem to our needs, towards expressing $\fomod$ sentences on finitary graphs as collections of CCHVs. The following lemma starts this process by unifying the radii of the Hanf sentences.

\begin{lemma}\label{hnfuniformradius}
Let $d\in\N$. For every sentence $\phi$ of $\fomod$ there exist an $\rmax\in\N$ and an $\fomod$ sentence $\psi$ such that $\phi \equiv_d \psi$, where $\psi$ is in HNF and every Hanf sentence has radius $\rmax$.
\begin{proof}
See appendix \ref{app:hnfuniformradius}.
\end{proof}
\end{lemma}

For our tailored version of HNF it is also necessary for the ``cap'' of every ``at least'' quantifier to be the same over all Hanf sentences. The following lemma achieves this.

\begin{lemma}\label{hnfuniformcap}
Let $d\in\N$. For every sentence $\phi$ of $\fomod$, there exist a $k\in\N$ and an $\fomod$ sentence $\psi$ such that $\phi \equiv_d \psi$, where
$\psi$ is in disjunctive normal form (DNF) of atoms of one of the forms
\begin{itemize}
	\item $\exists^{\geq k}x(r,\tau)$ for some $r\in\N$ and $\tau\in\Tball(r,d)$
	\item $\exists^{=m}x(r,\tau)$  for some $m\in\N$ s.t. $m<k$, $r\in\N$ and $\tau\in\Tball(r,d)$
	\item $\exists^{j(\Mod \ell)}x(r,\tau)$  for some $j,\ell\in\N$ s.t. $j<\ell$, $r\in\N$ and $\tau\in\Tball(r,d)$.
\end{itemize}
In particular these literals may only have radii and types from among those of Hanf sentences of $\phi$.
\begin{proof}
See appendix \ref{app:hnfuniformcap}.
\end{proof}
\end{lemma}

For $\tau\in\Tball(c-1,d)$, let the \emph{repetition} $\rep(\tau)$ be the number of vertices $v$ with $B_H(v,c-1)=\tau$ in an arbitrary representative $H\in\underline{\tau}$.

\begin{corollary}\label{fohistsentence}
Let $c,d\in\N$, $c\geq1$. For every sentence $\phi$ of $\fomod$ there exist an $\fomod$ sentence $\psi$ and $k\in\N$ such that $\phi \equiv^c_d \psi$ and $\psi$ is a DNF of atoms of one of the forms
\begin{itemize}
	\item $\exists^{\geq k\cdot\rep(\tau)}x(c-1,\tau)$ for some $\tau\in\Tball(c-1,d)$
	\item $\exists^{=m}x(c-1,\tau)$  for some $m\in\N$ s.t. $m<k\cdot\rep(\tau)$ and $\tau\in\Tball(c-1,d)$
	\item $\exists^{j(\Mod \ell)}x(c-1,\tau)$  for some $j,\ell\in\N$ s.t. $j<\ell$ and $\tau\in\Tball(c-1,d)$.
\end{itemize}
\begin{proof}
See appendix \ref{app:fohistsentence}.
\end{proof}
\end{corollary}

\begin{lemma}\label{fohists}
Let $c,d\in\N$, $c\geq1$. For every satisfiable $\fomod$ sentence $\phi$ there exist a $k\in\N$ and a finite 
set $X$ of $k$-capped component histogram vectors such that $P_\phi \cap \CC^c_d$ contains exactly the 
$c$-finitary, $d$-bounded degree graphs which satisfy at least one of the histogram vectors in $X$, i.e. \[P_\phi \cap \CC^c_d = \{G\in \CC^c_d \mid G\textnormal{ satisfies $\ol{a}$ for some $\ol{a}\in X$}\}.\]
\begin{proof}
See appendix \ref{app:fohists}.
\end{proof}
\end{lemma}

\begin{example}\label{ex:final}
	On the class $\mathcal C^2_1$, the number of component types is $M(2,1)=2$ and the  
	 component types are $t_1$ and $t_2$, where $t_1$ is an isolated vertex (a vertex with no neighbours) and $t_2$ is 
	 an isolated edge (a pair of vertices connected by an edge which have no other neighbours).

	Let $\psi$ be the $\fo$ sentence in Example~\ref{ex:closeness}. The property $P_\psi$ contains exactly the graphs which do \tbf{not} contain both type $t_1$ and type $t_2$ connected components. By Corollary \ref{fohists} we know that for some $k$ there is a collection of $k$-capped component histogram vectors such that every graph in $P_\psi$ satisfies at least one of them. It is easy to see that the 1-CCHVs $(0,0)$, $(0,(0,1))$, and $((0,1),0)$ meet this condition.

\medskip
We show that $P_\psi$ is uniformly testable in constant time. Let $0<\eps\leq1$ and assume we have oracle access to an input graph $G$ on $n$ vertices where $n\geq n_0$
	for some sufficiently large constant $n_0$.
(We can use an exact algorithm to decide the problem if $n\leq n_0$ for some constant $n_0\in \mathbb N$.)

The following is a constant time $\eps$-tester for $P_\psi$. Fix \[s = \ceil{\frac{N(2,1)^2}{\eps^2}\cdot\ln(N(2,1) + 40)}.\]
Call the algorithm \textsc{EstimateFrequencies}$(G,,s)$ to compute $\widetilde{\bdv}_c(G)$. From this we compute $\widetilde{\chv}_2(G)$, 
the histogram vector of the connected components seen during sampling. 
Let $n_1=\widetilde{\chv}_2(G)[1]$ and $n_2=\widetilde{\chv}_2(G)[2]$. If $n_2=0$ or ($n_1=0$ and $n$ is even), accept. Otherwise reject.

\begin{proof}
	The tester clearly proceeds in constant time.
	Suppose $G$ satisfies $P_\psi$. Then either $E(G)=\emptyset$ or $G$ is a disjoint union of isolated edges (and thus $|V(G)|$ is even). 
	In both cases the algorithm accepts.
	Suppose $G$ is $\eps$-far from satisfying $P_\psi$. Suppose towards contradiction that the tester accepts. We have two cases.

\tbf{Case 1.} No $t_2$ component was seen during sampling. Let $G'$ be the graph with $n$ vertices and no edges. 
	Clearly $G'\in P_\psi$. Since $\bdv_2(G') = (1,0) =\widetilde{\bdv}_2(G)$, we can apply Lemma \ref{freqestlemma} to see that with probability $\frac{19}{20}$, $\lVert \bdv_2(G')-\bdv_2(G) \rVert_1 = \cab{\widetilde{\bdv}_2(G) - \bdv_2(G)} \leq \epsilon$ and applying Lemma \ref{freqclosegraphclose} we see $G'$ and $G$ are $\eps$-close, a contradiction.

	\tbf{Case 2.} No $t_1$ component was seen during sampling and $G$ has an even number of vertices. Let $G'$ be the graph on $n$ vertices which only contains $t_2$ components. Clearly $G'\in P_\psi$. Since $\bdv_2(G') = (0,1) =\widetilde{\bdv}_2(G)$, we can apply Lemma \ref{freqestlemma} to see that with probability $\frac{19}{20}$, 
	$\lVert \bdv_2(G')-\bdv_2(G) \rVert_1 = \cab{\widetilde{\bdv}_2(G) - \bdv_2(G)} \leq \epsilon$ and applying Lemma \ref{freqclosegraphclose} we see $G'$ and $G$ are $\eps$-close, a contradiction.
\end{proof}
\end{example}

\medskip
In Section \ref{sec:main} we will extend this strategy to test FO properties on finitary graphs, using testers which first sample a finite portion and then check to see if the size of the graph allows any necessary edge modifications to be made.


\section{Testing first-order logic with modulo counting on finitary graphs}\label{sec:main}

We are now ready to prove our main theorem.
\begin{theorem}\label{thm:testingFO+MOD}
Let $c,d\in\N$. 
For every sentence $\phi$ of $\fomod$, the property $P_\phi\cap\CC^c_d$
	is uniformly testable in constant time on $\CC^c_d$. 
\end{theorem}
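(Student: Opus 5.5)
By Lemma~\ref{fohists} (the case where $\phi$ is unsatisfiable on $\CC^c_d$ is trivial: always reject), I will fix $k$ and a finite set $X$ of $k$-CCHVs with $P_\phi\cap\CC^c_d=\{G\in\CC^c_d\mid G\text{ satisfies some }\ol a\in X\}$. Let $L$ be the lcm of all moduli $\ell$ occurring in vectors of $X$. Since $P_\phi$ is a finite union of the properties $P_{\ol a}$ and a union of testable properties is testable (run each tester with amplified success probability and accept if one accepts), it suffices to give, for each fixed $\ol a$, a constant-time $\eps$-tester for $P_{\ol a}$. If $n\le n_0$ for a constant $n_0$ chosen below, I will decide membership exactly by exploring the whole graph.

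\textbf{The tester for $\ol a$.} Let $R$ be the set of rare types ($\ol a[t]\in\{0,\dots,k-1\}$) and $F$ the set of frequent types. With $s$ as in Lemma~\ref{freqestlemma} for $\lambda=\eps/4$, I will run \textsc{EstimateFrequencies}$(G,c,s)$, convert the result to an estimated component distribution (as in Algorithm~\ref{algo:histconversion}), and let $S\subseteq F$ be the set of frequent types whose estimated vertex-fraction exceeds $\eps/(4M(c,d))$. If $S=\emptyset$, I will use all of $F$ in its place. Note that $F\neq\emptyset$ is needed for large $n$. The tester accepts iff a certain arithmetic condition $\mathrm{Patch}(n,\ol a,S)$ holds. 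This condition says there exist counts $x_t$ for $t\in F$ with $x_t\ge k$, $x_t\equiv j_t\pmod{\ell_t}$, $x_t=k'_t$ (a small constant $<k+\ell_t$) for $t\in F\setminus S$, and $\sum_{t\in R}\ol a[t]\,|t|+\sum_{t\in F}x_t|t|=n$. Since $x_t=k'_t+\ell_t y_t$ for $t\in S$ with $y_t\in\N$, this reduces to asking whether $n-C$ is a conical combination of $\{\ell_t|t|:t\in S\}$, where $C$ is a fixed constant. By Corollary~\ref{fct+}, this holds for all $n\ge n_0$ iff $n-C\equiv 0\pmod g$ with $g=\gcd\{\ell_t|t|\}$. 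The condition is therefore a finite table lookup on $n\bmod L\cdot c!$, and the tester runs in constant time.

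\textbf{Completeness.} If $G$ satisfies $\ol a$, then every type $t$ with true fraction above $\eps/(8M)$ is frequent and, with probability $19/20$, all sampled types in $S$ are genuinely present. I will also include in $S$ only types whose true count is large, so the true counts of $G$ witness $\mathrm{Patch}$. This needs care: the witness must have $x_t=k'_t$ for $t\notin S$, but $G$ may contain many copies of a type that is frequent yet of tiny fraction. I will fix this by allowing $x_t$ to be arbitrary subject to the congruence for $t\notin S$ while keeping the residue class, so that $\mathrm{Patch}$ depends only on $n\bmod g_S$ with $g_S$ computed over all types present. The cleanest formulation is: accept iff $n$ lies in the residue classes realisable using the types in $F$ with the congruence constraints, together with the rejection rule that the sampled types include no rare type beyond an $\eps$-small mass.

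\textbf{Soundness, the main obstacle.} If the tester accepts, I must construct $H\models\ol a$ on exactly $n$ vertices with $\|\bdv_c(H)-\bdv_c(G)\|_1\le\eps$, and then apply Lemma~\ref{freqclosegraphclose}. The construction proceeds as follows. First I plant the rare components exactly. Next I take $\lfloor(1-\eps/2)\cdot\text{fraction}\rfloor$ copies of each type in $S$, rounded to the correct residues, which keeps the distribution close. Finally I fill the remaining $\Theta(\eps n)$ vertices using Frobenius: the remainder is large and lies in the correct residue class modulo $g_S$ precisely because $\mathrm{Patch}$ held, so Corollary~\ref{fct+} yields a conical combination of blocks of $\ell_t$ copies of types in $S$. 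The delicate point, where I expect most of the work to go, is making the residue condition checked by the tester depend only on $S$ (which is robustly estimable), rather than on low-frequency types the sample may miss. This amounts to showing that the set of low-frequency frequent types can be chosen consistently, so that the patchability verdict is identical for $G$ and for the constructed $H$.
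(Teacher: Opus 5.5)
Your proposal never fixes a single acceptance rule that works for both completeness and soundness, and the ``delicate point'' you defer is where the argument breaks.

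The rule you first define, $\mathrm{Patch}(n,\ol a,S)$ with $x_t=k'_t$ forced for $t\in F\setminus S$, amounts for large $n$ to $g_S\mid n-C$. This rule is not complete. Take $c=2$, $d=1$, $\ol a=((0,1),(0,1))$, and let $G$ consist of $k+1$ isolated vertices and $(n-k-1)/2$ isolated edges. Then $G$ satisfies $\ol a$. With high probability no isolated vertex is sampled, so $S=\{t_2\}$ and $g_S=2$. Now $\mathrm{Patch}$ demands $n\equiv k\pmod 2$, whereas $n\equiv k+1\pmod 2$, so you reject a graph in the property.

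You noticed this and relaxed the rule to the residues realisable over all of $F$, i.e.\ $g_F\mid n-C$ with $g_F=\gcd\{\ell_t\abs{t}:t\in F\}$. That is exactly the paper's test. But your soundness construction still fills the leftover only with blocks of types in $S$ and relies on the residue modulo $g_S$. The relaxed rule does not give you that residue, since $g_F$ can be a proper divisor of $g_S$; in the example $g_F=1$. So soundness is not established for the rule under which completeness holds. Moreover, the goal you set yourself, a residue condition depending only on $S$, is the wrong target. The example shows a $g_S$-test must fail, and the correct modulus $g_F$ is determined by $\ol a$ alone, not by the sample.

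The missing idea is that frequent types which are scarce in $G$ only ever need to be used a constant number of times, so using them costs nothing. The paper's construction runs as follows.
\begin{itemize}
\item Every $t\in F$ must occur at least $k$ times anyway. So it plants the rare components and $k'_t$ copies of every $t\in F$, which is a constant number of vertices.
\item It realises the bulk by $y\cdot\lcm\{\ell_t:t\in F\}$ copies of a constant-size graph $G_0$ built from the sampled components. $G_0$ contains only frequent types, so this changes no count modulo any $\ell_t$, and its total size is divisible by $g_F$.
\item It takes $y$ maximal such that the leftover still exceeds the Frobenius bound of Corollary~\ref{fct+} for $\{\ell_t\abs{t}:t\in F\}$. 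The leftover is then a multiple of $g_F$ of bounded size. Hence it is a conical combination of blocks $\ell_t\abs{t}$ over \emph{all} $t\in F$, unsampled types included.
\end{itemize}
The planted part and the leftover together move the ball distribution by only $O(1/n)$. This also avoids the $\Theta(\eps)$ distortion caused by your uncontrolled Frobenius fill of $\Theta(\eps n)$ vertices.

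If you want to keep your $S$-based bulk fill, you must first spend a bounded number of blocks of types in $F\setminus S$ to move $n-C$ from $0 \bmod g_F$ to $0 \bmod g_S$. This is possible because those blocks generate $g_F\mathbb Z/g_S\mathbb Z$, but that step is precisely what your proposal lacks.
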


\begin{proof}
Let $\phi$ be a sentence of FO and $c,d\in\N$. By Lemma \ref{fohists} there exists a set $X$ of $k$-capped component histogram vectors such that $P_\phi \cap \CC^c_d = \{G\in \CC^c_d \mid G\textnormal{ satisfies $\ol{a}$ for some $\ol{a}\in X$}\}$. Since testability is closed under finite unions it is sufficient to prove the result for the case that $X=\{\ol{u}\}$.

	Based on $\ol{u}$ we partition $\Tcomp(c,d)$ into two sets $\Tcomp(c,d) = \{s_1,\dots,s_m\}\cup\{t_1,\dots,t_\ell\}$, such that
\begin{itemize}
	\item $\ol{u}[s_i] \in[k-1]$ for $i\in[m]$ ($s_i$ is a ``rare'' component type), and
	\item $\ol{u}[t_i] = (a_i,b_i)$ where $a_i,b_i\in\N$, $a_i<b_i$ for $i\in[\ell]$ ($t_i$ is a ``frequent'' component type).
\end{itemize}

	\textbf{Algorithm.} Given $\epsilon\in(0,1]$, an $\eps$-tester for $P_\phi$ proceeds as follows.	
	Assume that
	\begin{align*}
	n>n_0\coloneq\frac{4}{\eps}\Bigg(&\sum_{i=1}^{m}\ol{u}[s_i]\cdot\abs{s_i} + \sum_{i=1}^{\ell}k_i\cdot\abs{t_i}+F(b_1\cdot\abs{t_1},\dots,b_\ell\cdot\abs{t_\ell}) \\
	&+ \lcm(b_1,\dots,b_\ell)\cdot(1+d^c)\cdot\ceil{\frac{N(c,d)^2}{(\eps /2)^2}\cdot\ln(N(c,d) + 40)}\Bigg)
	\end{align*}
	where $F$ is the function defined in corollary \ref{fct+}.

	Fix \[q=\ceil{\frac{N(c,d)^2}{(\eps /2)^2}\cdot\ln(N(c,d) + 40)}.\]

	Sample $q$ vertices of $G$ uniformly at random, exploring the $c$-neighbourhood at each vertex and determine its component type. We do this by calling Algorithm \ref{estfreq}, \linebreak \textsc{EstimateFrequencies}($G$,$c$,$q$) and computing the component type each sampled vertex lies in using Algorithm \ref{bhvtochv}.

	If a component of type $s_i$ is encountered, for any $i\in[m]$, reject. Otherwise, for $i\in[\ell]$ let 
	$k_i:=\min\{p\geq k\mid p\equiv a_i(\mod b_i)\}$.
	Compute the values
	$g\coloneq\gcd(b_1\cdot\abs{t_1},\dots,b_\ell\cdot\abs{t_\ell})$  and
	$n' \coloneq n - \sum_{i=1}^{m}\ol{u}[s_i]\cdot\abs{s_i} - \sum_{i=1}^{\ell}k_i\cdot\abs{t_i}$
	and accept if $g\vert n'$, otherwise reject.
	
	\textbf{Running time.} The algorithm clearly has constant running time: The number $q=q(\epsilon,d,c)$ is constant. 
	Determining the component of each sampled vertex and computing the values $g$ and $n'$ also take constant time. 

\textbf{Correctness.}
	We may assume that $n>n_0$ (otherwise we solve the problem exactly in constant time).
Assume that $G\models\phi$. Then $G$ has $\sum_{i=1}^{m}\ol{u}[s_i]$ ``rare'' components using $\sum_{i=1}^{m}\ol{u}[s_i]\cdot\abs{s_i}$ vertices in total. 
	Since this number over vertices is
	constant, with probability $\frac{1}{n}(n-\sum_{i=1}^{m}\ol{u}[s_i]\cdot\abs{s_i})$ none of these vertices will be seen during sampling and thus we will not reject initially.

Since $G$ satisfies $\ol{u}$, $G$ is the disjoint union of
\begin{itemize}
	\item $\ol{u}[s_i]$ components of type $s_i$ for each $i\in\{1,\dots,m\}$, with $\sum_{i=1}^{m}\ol{u}[s_i]\cdot\abs{s_i}$ vertices in total,
	\item $k_i$ components of type $t_i$ for each $i\in\{1,\dots,\ell\}$, with $\sum_{i=1}^{\ell}k_i\cdot\abs{t_i}$ in total, and
	\item some multiple of $b_i$ copies of type $t_i$ for each $t_1,\dots,t_\ell$, over the remaining $n'$ vertices.
\end{itemize}

	Since $g$ divides $b_i\cdot\abs{t_i}$ for every $i\in[\ell]$, we have that $g$ divides $n'$ and thus we accept.
	Assume that $G$ is $\eps$-far from $\phi$. Towards a contradiction suppose we accept; in particular none of the component types $s_1,\dots,s_m$ are seen during sampling and $g\vert n'$. We construct a graph $G'$ such that $G'\models\phi$ and $G'$ is $\eps$-close to $G$.  
Let $G_0 \subseteq G$ be the graph consisting of all the vertices and edges seen during sampling. Note that $\bdv_{G_0}(c)=\widetilde{\bdv}_{G}(c)$.

By Corollary \ref{fct+} there is a function $F$ such that $F(b_1\cdot\abs{t_1},\dots,b_\ell\cdot\abs{t_\ell})$ is the largest multiple of $g$ which is not a conical combination of the integers $(b_1\cdot\abs{t_1}),\dots,(b_\ell\cdot\abs{c_l})$. 
	Let \[y':=\max\{y\mid n'-y\cdot\lcm(b_1,\dots,b_\ell)\cdot\abs{G_0}>F(b_1\cdot\abs{t_1},\dots,b_\ell\cdot\abs{t_\ell})\}.\]
We can see that $g$ divides $n'-y\cdot\lcm(b_1,\dots,b_\ell)\cdot\abs{G_0}$, as by assumption $g$ divides $n'$ and it is easy to see $g$ divides $\lcm(b_1,\dots,b_\ell)\cdot\abs{G_0}$.
	Thus it follows that there exist $p_i\in\mathbb N$ for $i\in[\ell]$ and a graph $\Gfill$ which consists of the disjoint union of 
	$p_i\cdot b_i$ components of type $t_i$ for  
	each $i\in[\ell]$, such that $\abs{\Gfill}=n'-y'\cdot\lcm(b_1,\dots,b_\ell)\cdot\abs{G_0}$.

Let $G'$ be the disjoint union of
\begin{itemize}
	\item $\ol{u}[s_i]$ components of type $s_i$ for each $i\in[m]$
	\item $k_i$ components of type $t_i$ for each $i\in[\ell]$
	\item $y'\cdot\lcm(b_1,\dots,b_\ell)$ copies of $G_0$
	\item $\Gfill$
\end{itemize}

We denote by $H$ the subgraph of $G'$ consisting of $y'\cdot\lcm(b_1,\dots,b_\ell)$ copies of $G_0$ and denote by $\Gfix$ the subgraph of $G'$ consisting of all connected components not in $H$ or $\Gfill$. We let $n_H\coloneq\abs{H}$, $\nfill\coloneq\abs{\Gfill}$, and $\nfix\coloneq\abs{\Gfix}$.
For $i\in[m]$ the number of components of $G'$ of type $s_i$ is equal to the $k$-CCHV entry $\ol{u}[s_i]$.
For $i\in[\ell]$ the number of components of $G'$ of type $t_i$ is both $\geq k$ and congruent $a_i(\Mod b_i)$ as there are $a_i(\Mod b_i)$ such components in $\Gfix$ and $0(\Mod b_i)$ such components in $H$ and in $\Gfill$.
Thus $G'\models\phi$. 

It remains to show that $G'$ is $\eps$-close to $G$.
First notice that $\bdv_{H}(c) = \bdv_{G_0}(c) = \widetilde{\bdv}_{G}(c)$ by construction. Since $q$ is sufficiently large we can apply lemma \ref{freqestlemma} to see that, with probability $\frac{19}{20}$, $\widetilde{\bdv}_{G}(c)$ is $\eps_2$-close to $\bdv_{G}(c)$, i.e. $\cab{\bdv_H(c) - \bdv_G(c)} \leq \eps_2$.

\medskip
To see that $\bdv_{H}(c)$ and $\bdv_{G'}(c)$ are $\eps_1$-close, for all $i\in\{1,\dots,M(c,d)\}$ let $h_i \coloneq \bhv_{H}(c)[i]$ and $f_i \coloneq \bhv_{\Gfill}(c)[i] + \bhv_{\Gfix}(c)[i]$, the numbers of vertices of $G'$ with ball type $\tau_i\in\Tball(c,d)$, in $H$ and in $\Gfix$ or $\Gfill$ respectively.

We know that $\sum_{i\in\{1,\dots,M(c,d)\}} h_i = n_H$ and $\sum_{i\in\{1,\dots,M(c,d)\}} f_i = \nfill + \nfix$. Thus the distribution vectors satisfy
\begin{align*}
	&\cab{\bdv_{G'}(c) - \bdv_{H}(c)} = \sum_{i\in\{1,\dots,M(c,d)\}} \abs{\bdv_{G'}(c)[i] - \bdv_{H}(c)[i]}  \\
	&= \sum_{i\in\{1,\dots,M(c,d)\}} \abs{\frac{h_i + f_i}{n} - \frac{h_i}{n_H}} 
	= \sum_{i\in\{1,\dots,M(c,d)\}} \abs{\frac{(n_H-n)h_i + n_H f_i}{nn_H}} \\
	&\leq  \sum_{i\in\{1,\dots,M(c,d)\}} \abs{\frac{(n_H-n)}{nn_H}h_i} +  \sum_{i\in\{1,\dots,M(c,d)\}} \abs{\frac{1}{n}f_i}  \\
	&= \frac{n-n_H}{n} + \frac{\nfix + \nfill}{n} 
	= \frac{2(\nfix + \nfill)}{n} < \eps/2 \hspace{1cm} \left(\tn{as }n>n_0> \frac{4(\nfix + \nfill)}{\eps}\right).
\end{align*}

\medskip
Finally $\cab{\bdv_{G'}(c) - \bdv_{G}(c)}\leq \cab{\bdv_{G'}(c) - \bdv_{H}(c)} + \cab{\bdv_{H}(c) - \bdv_{G}(c)} \leq \eps/2 + \eps/2 = \eps$ by the triangle inequality and thus $G'$ is $\eps$-close to $G$ by direct consequence of Lemma \ref{freqclosegraphclose}.
\end{proof}

We obtain the following extension to $\cmso$ as
an immediate consequence of Theorem~\ref{thm:testingFO+MOD}.

\begin{corollary}\label{cor:testingCMSO}
Let $c,d\in\N$. 
For every sentence $\phi$ of $\cmso$, the property  $P_\phi\cap\CC^c_d$
	 is uniformly testable in constant time on $\CC^c_d$. 
\end{corollary}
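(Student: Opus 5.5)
The plan is to reduce $\cmso$ to $\fomod$ on $\CC^c_d$ and then apply Theorem~\ref{thm:testingFO+MOD} directly. The reduction uses the expressive-power result of Eickmeyer, Elberfeld and Harwath (2017), as announced in the abstract. Their result says that on classes of bounded treedepth, order-invariant $\mso$ (and in particular $\cmso$) has the same expressive power as $\fomod$. Elberfeld, Grohe and Tantau~\cite{ElberfeldGT16} show the corresponding statement for plain $\mso$ versus $\fo$.

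First, I check that $\CC^c_d$ has bounded treedepth. Every connected graph on at most $c$ vertices has treedepth at most $c$: a path from the root through all its vertices gives a valid elimination forest. Treedepth of a disjoint union is the maximum over its components. Hence every graph in $\CC^c_d$ has treedepth at most $c$, and the class is closed under taking disjoint unions of its members.

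Second, given a $\cmso$ sentence $\phi$, I invoke the cited theorem. It yields an $\fomod$ sentence $\phi'$ with $P_\phi\cap\mathcal D=P_{\phi'}\cap\mathcal D$, where $\mathcal D$ is the class of graphs of treedepth at most $c$. The modulo counting quantifiers of $\cmso$ are exactly why the target is $\fomod$ rather than $\fo$. Intersecting with $\CC^c_d\subseteq\mathcal D$ gives $P_\phi\cap\CC^c_d=P_{\phi'}\cap\CC^c_d$.

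Third, Theorem~\ref{thm:testingFO+MOD} applied to $\phi'$ gives, for each $\eps$, a uniform constant-time $\eps$-tester for $P_{\phi'}\cap\CC^c_d$. Since the two properties coincide on the inputs the tester is run on, this is also a tester for $P_\phi\cap\CC^c_d$. The translation $\phi\mapsto\phi'$ happens once, outside the algorithm, so it does not affect the running time. Constant time holds with constants depending on $\phi$, $c$, $d$ and $\eps$.

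The main obstacle is matching the hypotheses of the cited result exactly. I need to confirm that it covers $\cmso$ with modulo counting, and not only order-invariant $\mso$, over graphs of bounded treedepth. I also need to confirm that its translation target is $\fomod$. If the theorem were only stated for order-invariant $\mso$, I would observe that $\cmso$ with modulus $m$ is definable in order-invariant $\mso$, since counting mod $m$ along a linear order is order-invariant, and then apply it in that form.
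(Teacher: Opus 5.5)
Your proposal is correct and follows the paper's own argument (Appendix~E): combine the Eickmeyer--Elberfeld--Harwath theorem (order-invariant MSO and FOMOD coincide on bounded treedepth) with the fact that CMSO is contained in order-invariant MSO, observe that finitary classes have bounded treedepth, and apply Theorem~\ref{thm:testingFO+MOD}. Your explicit treedepth bound of $c$ and your remark that the translation to FOMOD is done once, outside the tester, are small details the paper leaves implicit.
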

\begin{proof}
	See appendix \ref{app:testingCMSO}.
\end{proof}

\section{Conclusion}\label{sec:conclusion}
We showed that on graph classes containg all graphs that have both a fixed upper bound on the degree and on the component size, every
$\fomod$-property is testable with constant running time (Theorem~\ref{thm:testingFO+MOD}).
Since $\cmso$ and $\fomod$ have the same
expressive power on these classes, it follows that every $\cmso$-property is testable on these graph classes as well (Corollary~\ref{cor:testingCMSO}).
Indeed, it is not hard to see that this theorem can be generalised to relational structures of bounded degree, in the model introduced in~\cite{AdlerH18}, 
and we get the following.
\begin{theorem} Let $\sigma$ be a finite relational signature and let $d,c\in\mathbb N$.
	Let $\CC^c_d$ be the class consisting of all $\sigma$-structures that 
	have both degree at most $d$ and connected component size at most $c$. 
	For every sentence $\phi$ of $\fomod[\sigma]$ the property $P_\phi\cap\CC^c_d$ is testable with constant running time on $\mathcal C$.
\end{theorem}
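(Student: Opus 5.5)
The plan is to reduce the final theorem to Theorem~\ref{thm:testingFO+MOD} by checking that each ingredient of that proof goes through for bounded degree $\sigma$-structures, in the model of~\cite{AdlerH18}. There, oracle queries return the tuples in which a given element occurs, and the distance counts tuple modifications normalised by $dn$. The degree of an element is the number of tuples it occurs in, and connectivity is taken in the Gaifman graph.

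\textbf{Step 1: ball and component types.} Because the signature is finite and the degree is at most $d$, there are finitely many isomorphism types of $r$-balls. These are rooted substructures induced on Gaifman neighbourhoods. Likewise there are finitely many component types of size at most $c$. So $N(r,d)$, $M(c,d)$, ball and component histogram vectors, and Algorithm~\ref{algo:histconversion} are all defined exactly as for graphs. Lemma~\ref{freqestlemma} only uses uniform sampling and exploration of $r$-balls, so it transfers verbatim. For Lemma~\ref{freqclosegraphclose}, two $c$-finitary structures on $n$ elements with $\ell_1$-close $c$-ball distributions can be matched component by component. The unmatched components cover at most a $\lambda$-fraction of the elements, and each unmatched element carries at most $d$ tuples. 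This gives closeness in the structure model, possibly with a constant factor absorbed into $\lambda$.

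\textbf{Step 2: logical normal form.} Hanf locality for $\fomod$ (Theorem~\ref{hanflocalityfo}, from~\cite{HeimbergKS16}) already holds for bounded degree relational structures. The manipulations in Lemmas~\ref{hnfuniformradius} and~\ref{hnfuniformcap} and in Corollary~\ref{fohistsentence} are purely combinatorial statements about counting ball types. Hence Lemma~\ref{fohists} holds with component types of $\sigma$-structures: $P_\phi\cap\CC^c_d$ is a finite union of sets defined by $k$-CCHVs.

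\textbf{Step 3: the tester.} The algorithm and correctness argument of Theorem~\ref{thm:testingFO+MOD} then apply unchanged. The tester rejects if it sees a rare type, and otherwise checks whether $g$ divides $n'$. In the far case we build $G'$ from the fixed components, copies of the sample $G_0$, and a filler obtained from Corollary~\ref{fct+}. This construction uses only disjoint unions of components and sizes $\abs{t}$, with no graph-specific facts.

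The main obstacle I expect is Step 1's version of Lemma~\ref{freqclosegraphclose}, together with fixing the exact distance measure of~\cite{AdlerH18}. Tuples of arity larger than $2$ may have to be counted with a factor depending on $\sigma$. I would handle this by matching components of equal type and bounding the remaining modifications by $d\cdot(\text{number of unmatched elements})$. If the normalisation comes out differently, I would rescale $\eps$ in the choice of $q$ and $n_0$.
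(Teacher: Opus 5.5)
Your Steps 1 and 2 are the right checks and match what the paper intends; the paper itself only asserts that the generalisation is routine. In particular, your component-matching version of Lemma~\ref{freqclosegraphclose} is correct, and with degree counting tuples, as in your set-up, it loses no constant. If $\cab{\bdv_c(G_1)-\bdv_c(G_2)}\le\lambda$, then at most $\lambda n/2$ elements of each structure lie in unmatched components. Deleting the tuples of one side's unmatched components and inserting the other side's costs at most $\lambda dn$ modifications.

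The gap is in Step 3, and you inherit it from the paper. There you import the correctness argument of Theorem~\ref{thm:testingFO+MOD} unchanged and build $G'$ from copies of $G_0$. That argument rests on the identity $\bdv_{G_0}(c)=\widetilde{\bdv}_G(c)$, which is false:
\begin{itemize}
\item Exploring the $c$-ball of a sampled element reveals its whole component, so $G_0$ contains every element of each sampled component, each with its own ball type rather than the sampled one.
\item $G_0$ contains each sampled component once, whatever its size, so a component type $t$ is over-weighted by the factor $\abs{t}$.
\end{itemize}
For example, suppose half the elements of $G$ are isolated and the rest form isolated edges. Then about half the samples hit each type, but only about a third of the elements of $G_0$ are isolated. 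Copies of $G_0$ are therefore at $\ell_1$-distance about $1/3$ from $\bdv_c(G)$, and the closeness conclusion fails for small $\eps$. The tester itself is unaffected, since its decision never uses $G_0$; only the witness $G'$ is wrong.

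The repair is short. Let $L=\lcm(1,\dots,c)$, and let $B$ be the disjoint union, over the $q$ samples, of $L/\abs{t_j}$ copies of the component type $t_j$ of the $j$-th sampled element. Then:
\begin{itemize}
\item $\abs{B}=qL$, $B$ contains only frequent types, and $g$ divides $\lcm(b_1,\dots,b_\ell)\cdot\abs{B}$.
\item The fraction of elements of $B$ lying in type-$t$ components equals the fraction of samples that land in type-$t$ components.
\item In a $c$-finitary structure every $c$-ball type determines its component type. So $\bdv_c$ is obtained from the vector of component fractions by a fixed map that preserves $\ell_1$-distance. Conversely, the component fractions are obtained from ball frequencies by summing over ball types, which cannot increase $\ell_1$-distance.
\end{itemize}
Hence $\cab{\bdv_c(B)-\bdv_c(G)}\le\cab{\widetilde{\bdv}_c(G)-\bdv_c(G)}\le\eps/2$.

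Now use $y'\cdot\lcm(b_1,\dots,b_\ell)$ copies of $B$ in place of the copies of $G_0$, and put $\lcm(b_1,\dots,b_\ell)\cdot L\cdot q$ in place of the corresponding term of $n_0$. With these changes the rest of the argument goes through, and so does your transfer to $\sigma$-structures.

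Also make sure that types $t$ with $\ol{u}[t]=0$ are counted as rare; the range $[k-1]$ in the paper omits them. Otherwise a forbidden component in the sample is not rejected and ends up in $G'$.
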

We see this as a step towards answering an open question from~\cite{AdlerH18}, which asks if $\cmso$ is \emph{constant time} testable on classes that contain all relational
structures which have both degree at most $d$ and treewidth at most $t$. However, the question remains open.


\bibliographystyle{plainurl}
\bibliography{biblio}


\appendix
\section{Proof of lemma \ref{hnfuniformradius}}\label{app:hnfuniformradius}

Let $d\in\N$. For every sentence $\phi$ of $\fomod$ there exist an $\rmax\in\N$ and an $\fomod$ sentence $\psi$ such that $\phi \equiv_d \psi$, where $\psi$ is in HNF and every Hanf sentence has radius $\rmax$.

\begin{proof}
Applying \ref{hanflocalityfo} we may assume w.l.o.g. that $\phi$ is in HNF. 

Each Hanf sentence of $\psi$ is either $\exists^{\geq m}x(r,\tau)$ for some $r,m\in\N$ and $\tau\in\Tball(r,d)$ or $\exists^{j (\Mod \ell)}x(r,\tau)$ for some $r\in\N$, $\tau\in\Tball(r,d)$ and $j,\ell\in\N$ such that $j<\ell$. Let $\rmax$ be the largest radius $r$ of a Hanf sentence of $\phi$.

Recall that for every $r\leq\rmax$, $\Tball(r,d)\subseteq\Tball(\rmax,d)$. Let $S\subseteq\Tball(\rmax,d)$ be of maximum cardinality such that every 
$\tau'\in S$ is a supertype of $\tau$. We ennumerate $S=\{\tau'_1,\dots,\tau'_s\}$.

Let $X\coloneq[s]$, $Y_1\coloneq[m-1]\cup\{\quo{\geq m}\}$ and $Y_2\coloneq\{0\}\cup[\ell -1]$. We define two sets of $s$-tuples

\[B_1=\left\{ \ol{b}\in Y_1^X \mid \sum_{i=1}^s \ol{b}[i] \geq m \textnormal{ or } \ol{b}[i]=\quo{\geq m} \textnormal{ for some } i\in[s] \right\}\]

\[B_2=\left\{ \ol{b}\in Y_2^X \mid \sum_{i=1}^s \ol{b}[i] \equiv j (\Mod \ell) \right\}.\]

We replace $\exists^{\geq m}x(r,\tau)$ with the sentence

\[\biglor_{\ol{b}\in B_1}\left( \bigland_{i=1}^s \chi^{\ol{b}}_i \right)\]

where \[\chi^{\ol{b}}_i \coloneq
\begin{cases}
	\exists^{=\ol{b}[i]}x(\rmax,\tau'_i) & \textnormal{if } \ol{b}[i] \in [m-1] \\
	\exists^{\geq m}x(\rmax,\tau'_i) & \textnormal{if } \ol{b}[i] = \quo{\geq m}
\end{cases} \hspace{1cm} \tn{ for } i\in[s].
\]

We replace $\exists^{j (\Mod \ell)}x(r,\tau)$ with the sentence

\[\biglor_{\ol{b}\in B_2}\left( \bigland_{i=1}^s \left( \exists^{\ol{b}[i] (\Mod \ell)}x(\rmax,\tau'_i) \right) \right).\]

Let $\psi$ be the sentence obtained by performing the above replacement for each Hanf sentence of $\phi$ with radius less than $\rmax$. It is easy to see that $\psi \equiv_d \phi$ and the radius of every Hanf sentence of $\psi$ is $\rmax$.
\end{proof}

\section{Proof of lemma \ref{hnfuniformcap}}\label{app:hnfuniformcap}

Let $d\in\N$. For every sentence $\phi$ of $\fomod$, there exist a $k\in\N$ and an $\fomod$ sentence $\psi$ such that $\phi \equiv_d \psi$, where
$\psi$ is in disjunctive normal form (DNF) of atoms of one of the forms
\begin{itemize}
	\item $\exists^{\geq k}x(r,\tau)$ for some $r\in\N$ and $\tau\in\Tball(r,d)$
	\item $\exists^{=m}x(r,\tau)$  for some $m\in\N$ s.t. $m<k$, $r\in\N$ and $\tau\in\Tball(r,d)$
	\item $\exists^{j(\Mod \ell)}x(r,\tau)$  for some $j,\ell\in\N$ s.t. $j<\ell$, $r\in\N$ and $\tau\in\Tball(r,d)$.
\end{itemize}
In particular these literals may only have radii and types from among those of Hanf sentences of $\phi$.

\begin{proof}
Applying \ref{hanflocalityfo} we may assume w.l.o.g.\ that $\phi$ is in HNF. We may further assume that in particular $\phi$ is a disjunctive normal form with (possibly negated) literals from among Hanf sentences $\exists^{\geq m}x(r,\tau)$  for some $m\in\N$, $r\in\N$ and $\tau\in\Tball(r,d)$ and $\exists^{y(\Mod z)}x(r,\tau)$  for some $y,z\in\N$ s.t. $y<z$, $r\in\N$ and $\tau\in\Tball(r,d)$.
Let \[k \coloneq \max\{m\mid \exists^{\geq m}x(r,\tau) \textnormal{ or } \neg(\exists^{\geq m}x(r,\tau)) \textnormal{ is a literal of } \phi \textnormal{ for some } r,\tau \}.\]

For each conjunctive clause of the form $(\exists^{\geq m}x(r,\tau)\land\mu)$ of $\phi$ such that $m<k$, we replace the clause with

\[\left( \biglor_{m \leq i < k} \left(\exists^{=i}x(r,\tau) \land \mu\right) \right)\lor (\exists^{\geq k}x(r,\tau) \land\mu).\]

For each conjunctive clause of the form $(\neg\exists^{\geq m}x(r,\tau) \land \mu)$ of $\phi$ such that $m<k$, we replace the clause with

\[\left( \biglor_{1 \leq i < m} \left(\exists^{=i}x(r,\tau) \land \mu\right) \right).\]

For each conjunctive clause of the form $(\neg\exists^{j(\Mod \ell)}x(r,\tau)\land\mu)$ of $\phi$, we replace the clause with

\[\left( \biglor_{i\in\{0,\dots,z-1\}\setminus\{j\}}\left(\exists^{i(\Mod \ell)}x(r,\tau)\land\mu \right)\right).\]

Let $\psi$ be the sentence obtained after performing maximally many such replacements. We can see that this takes finitely many steps and that $\psi\equiv_d\phi$ as required. Every sentence added during a replacement must have a radius and type already appearing in $\phi$.
\end{proof}

\section{Proof of corollary \ref{fohistsentence}}\label{app:fohistsentence}

Let $c,d\in\N$, $c\geq1$. For every sentence $\phi$ of $\fomod$ there exist an $\fomod$ sentence $\psi$ and $k\in\N$ such that $\phi \equiv^c_d \psi$ and $\psi$ is a DNF of atoms of one of the forms
\begin{itemize}
	\item $\exists^{\geq k\cdot\rep(\tau)}x(c-1,\tau)$ for some $\tau\in\Tball(c-1,d)$
	\item $\exists^{=m}x(c-1,\tau)$  for some $m\in\N$ s.t. $m<k\cdot\rep(\tau)$ and $\tau\in\Tball(c-1,d)$
	\item $\exists^{j(\Mod \ell)}x(c-1,\tau)$  for some $j,\ell\in\N$ s.t. $j<\ell$ and $\tau\in\Tball(c-1,d)$.
\end{itemize}

\begin{proof}
We may assume w.l.o.g. that $\phi$ is in Hanf normal form. First transform $\phi$ as in the proof for Lemma \ref{hnfuniformradius} with the additional condition that $\rmax \geq c$. Now we apply Lemma \ref{hnfuniformcap} to get $\psi \equiv_d \phi$ such that $\psi$ is a DNF of atomic sentences of one of the forms
\begin{itemize}
	\item $\exists^{\geq k}x(\rmax,\tau)$ for some $\tau\in\Tball(\rmax,d)$
	\item $\exists^{=m}x(\rmax,\tau)$  for some $m\in\N$ s.t. $m<k$ and $\tau\in\Tball(\rmax,d)$
	\item $\exists^{j(\Mod \ell)}x(\rmax,\tau)$  for some $j,\ell\in\N$ s.t. $j<\ell$ and $\tau\in\Tball(\rmax,d)$.
\end{itemize}
Let $\tau^*\in\Tball(c-1,d)$ be the type of a singleton vertex. Define $\fomod$ sentences $\T\coloneq \left(\exists^{0(\Mod 1)}x(c-1,\tau^*)\right)$ and $\F\coloneq \left(\exists^{0(\Mod 2)}x(c-1,\tau^*)\right) \land \left(\exists^{1(\Mod 2)}x(c-1,\tau^*)\right)$.

Replace each atom of $\psi$ as follows:
\[\exists^{\geq k}x(\rmax,\tau) \mapsto\begin{cases}
\exists^{\geq k}x(c-1,\tau) & \tn{if } \tau\in\Tball(c-1,d) \\
\F & \tn{if } \tau\notin\Tball(c-1,d)
\end{cases}\]

\[\exists^{=m}x(\rmax,\tau) \mapsto \begin{cases}
\exists^{=m}x(c-1,\tau) & \tn{if } \tau\in\Tball(c-1,d) \\
\T & \tn{if } \tau\notin\Tball(c-1,d) \tn{ and } m=0 \\
\F & \tn{if } \tau\notin\Tball(c-1,d) \tn{ and } m\neq0
\end{cases}\]

\[\exists^{j(\Mod \ell)}x(\rmax,\tau) \mapsto\begin{cases}
\exists^{j(\Mod \ell)}x(c-1,\tau) & \tn{if } \tau\in\Tball(c-1,d) \\
\T & \tn{if } \tau\notin\Tball(c-1,d) \tn{ and } j=0 \\
\F & \tn{if } \tau\notin\Tball(c-1,d) \tn{ and } j\neq0
\end{cases}\]

Let $\psi'$ be the sentence obtained from $\psi$ by performing the replacements above. We can easily see $\psi'\equiv^c_d\psi$ as for every vertex $v$ of a $c$-finitary graph $G$ we have that $B_G(v,\rmax)=B_G(v,c-1)$ as $\rmax\geq c$ and the maximum radius of a connected component with $c$ vertices is $c-1$.

For each conjunctive clause of the form $(\exists^{\geq k}x(c-1,\tau) \land \mu)$ of $\psi'$ such that $\rep(\tau)>1$, we replace the clause with

\[\left( \biglor_{i=k}^{k\cdot\rep(\tau)-1} \left(\exists^{=i}x(c-1,\tau) \land \mu\right) \right)\lor (\exists^{\geq k\cdot\rep(\tau)}x(c-1,\tau) \land\mu).\]

Let $\psi''$ be the sentence obtained from $\psi'$ by making a finite, maximal number of such replacements. We can see that $\psi''\equiv^c_d\phi$ and $\psi''$ is a DNF of the required atoms.
\end{proof}

\section{Proof of lemma \ref{fohists}}\label{app:fohists}

Let $c,d\in\N$, $c\geq1$. For every satisfiable $\fomod$ sentence $\phi$ there exist a $k\in\N$ and a finite 
set $X$ of $k$-capped component histogram vectors such that $P_\phi \cap \CC^c_d$ contains exactly the 
$c$-finitary, $d$-bounded degree graphs which satisfy at least one of the histogram vectors in $X$, i.e. \[P_\phi \cap \CC^c_d = \{G\in \CC^c_d \mid G\textnormal{ satisfies $\ol{a}$ for some $\ol{a}\in X$}\}.\]

\begin{proof}
If $\phi$ is not satisfiable on $\CC^c_d$, then let $X=\es$. For the rest of the proof assume that $\phi$ is satisfiable on $\CC^c_d$.
By Corollary \ref{fohistsentence} we know that there is a sentence $\psi$ and $k\in\N$ such that $\phi \equiv^c_d \psi$ and $\psi$ is a DNF of atoms of the forms
\begin{itemize}
	\item $\exists^{\geq k\cdot\rep(\tau)}x(c-1,\tau)$ for some $\tau\in\Tball(c-1,d)$
	(a ``$\geq$-sentence")
	\item $\exists^{=m}x(c-1,\tau)$  for some $m\in\N$ s.t. $m<k\cdot\rep(\tau)$ and $\tau\in\Tball(c-1,d)$
	(an ``$=$-sentence")
	\item $\exists^{j(\Mod \ell)}x(c-1,\tau)$  for some $j,\ell\in\N$ s.t. $j<\ell$ and $\tau\in\Tball(c-1,d)$
	(a ``$\mod$-sentence").
\end{itemize}

For each conjunctive clause $\sigma=\chi_1\land\dots,\land\chi_s$ of $\phi$ such that $\sigma$ is satisfiable on $\CC^c_d$, we construct a finite set of $k$-CCHVs. The type of sentence $\chi_i$ is denoted $\tau_i$ for $i\in[s]$.
As $\sigma$ is satisfiable on $\CC^c_d$ we may assume $\underline{\tau_i}\in\Tcomp(c,d)$ for all $i\in[s]$.

For each $t\in\Tcomp(c,d)$, let $S^\sigma_t=\{\chi\in\{\chi_1,\dots,\chi_s\} \mid \underline{\tau_i}=t, i\in[s]\}$ be the set of sentences in the conjunction whose types have underlying component type 
$t$. For each $t\in\Tcomp(c,d)$ we define the set $e^\sigma_t$ by cases. It contains the possible entries of position $t$ of a $k$-CCHV vector ``describing'' the conjunctive clause.

\tbf{Case 1.} $S^\sigma_t$ is empty.
Let $e^\sigma_t=\{0\}\cup[k-1]\cup\{(0,1)\}$.

\tbf{Case 2.} $S^\sigma_t$ contains only ``$=$-sentence''s.
As $\sigma$ is satisfiable, for some $p\in\N$ each ``$=$-sentence'' $\exists^{=m}x(c-1,\tau)\in S^\sigma_t$ is such that $\frac{m}{\rep\tau}=p$. Otherwise the sentences assert that components of type $t$ appear a contradictory number of times.
Let $e^\sigma_t=\{p\}$.

\tbf{Case 3.} $S^\sigma_t$ contains only ``$\geq$-sentence''s.
Let $e^\sigma_t=\{(0,1)\}$.

\tbf{Case 4.} $S^\sigma_t$ contains only ``$\mod$-sentence''s. Each ``$\mod$-sentence'' $\exists^{j(\Mod \ell)}x(c-1,\tau)$ requires the number $p$ of $\tau$-type vertices to satisfy $p\equiv j(\mod\ell)$ and $p\equiv 0(\mod\rep(\tau))$. As $\sigma$ is satisfiable this has a solution $p\equiv i (\mod \lcm(\ell,\rep(\tau)))$ \cite{Jonesjones98}.
Let $z$ be the number of type $t=\underline{\tau}$ components. It follows that $z$ satisfies $z\equiv \frac{i}{\rep(\tau)} (\Mod \frac{\lcm(\ell,\rep(\tau))}{\rep(\tau)})$ as $z\cdot\rep(tau)=p$.
The collection $S^\sigma_t$ of ``$\mod$-sentence''s therefore defines a finite system of modular equations for $z$. As $\sigma$ is satisfiable, this system has a unique solution $z\equiv r(\mod n)$ for some $n\in\N$ depending only on the equations \cite{Jonesjones98}.
Let $e^\sigma_t=\{(r,n)\}\cup\{z\in[k-1] \mid z\equiv r(\mod n)\}$.

\tbf{Case 5.} $S^\sigma_t$ contains ``$\mod$-sentence''s and ``$\geq$-sentence''s.
As in case 4,  the ``$\mod$-sentence''s define a finite system of modular equations for the number $z$ of components of type $t$. This has solution $z\equiv r(\mod n)$.
Let $e^\sigma_t=\{(r,n)\}$.

\tbf{Case 6.} $S^\sigma_t$ contains ``$\mod$-sentence''s and ``$=$-sentence''s.
As in case 4,  the ``$\mod$-sentence"s define a finite system of modular equations for the number $z$ of components of type $t$. This has solution $z\equiv r(\mod n)$. As $\sigma$ is satisfiable, for some $p\in\N$ each ``$=$-sentence'' $\exists^{=m}x(c-1,\tau)$ is such that $\frac{m}{\rep\tau}=p$ and further $p\equiv r(\mod n)$.
Let $e^\sigma_t=\{p\}$.

These cases are exhaustive as $\phi$ is satsifiable and therefore ``$=$-sentence''s and ``$\geq$-sentence''s cannot co-occur in $S^\sigma_t$.

Let $X_\sigma=\{\ol{a} \mid \ol{a}[t]\in e^\sigma_t \tn{ for } t\in\Tball(c,d)\}$ and let $X=\bigcup_{\sigma \tn{ conjunctive clause of  }\phi}(X_\sigma)$. It is easy to verify that $X$ satisfies the statement of the lemma by construction.
\end{proof}

\section{Proof of corollary \ref{cor:testingCMSO}}\label{app:testingCMSO}

Let $c,d\in\N$. 
For every sentence $\phi$ of $\cmso$, the property  $P_\phi\cap\CC^c_d$ 
is uniformly testable in constant time on $\CC^c_d$.

We will make use of the following theorem of Eickmeyer, Elberfeld and Harwath \cite{EickmeyerEH17}.

\begin{theorem}[\cite{EickmeyerEH17}]
		Order-invariant $\mso$ and $\fomod$ have the same expressive power on graphs of bounded tree-depth.
	\end{theorem}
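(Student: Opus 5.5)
The statement has two inclusions. The inclusion $\fomod\subseteq$ order-invariant $\mso$ is easy and holds on all finite graphs. The converse is the real content and needs bounded tree-depth. I plan to prove the converse by induction on tree-depth, using composition theorems for ordered sums of ordered structures together with periodicity in finite semigroups.

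\textbf{Easy direction.} Plain $\fo$ is trivially order-invariant $\mso$. For a counting quantifier $\exists^{j(\Mod \ell)}x\,\theta(x)$, given a linear order $<$, I will write an $\mso$ formula that guesses sets $X_0,\dots,X_{\ell-1}$ partitioning the $\theta$-vertices. Membership is forced cyclically along $<$: the $<$-least $\theta$-vertex lies in $X_0$, and each $\theta$-vertex in $X_i$ has its $<$-successor among $\theta$-vertices in $X_{i+1 \bmod \ell}$. The formula then asserts that the $<$-largest $\theta$-vertex lies in $X_{j-1 \bmod \ell}$, with the empty case handled separately. This is $\mso[<]$, and its truth value equals the true count modulo $\ell$, so it is order-invariant. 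Induction on formulas finishes this direction.

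\textbf{Hard direction.} Fix tree-depth $h$ and an order-invariant $\mso[E,<]$ sentence $\phi$ of quantifier rank $q$. I work with graphs of tree-depth $\le h$ expanded by vertex labels recording adjacency to a bounded tuple of ancestors. The key invariant, proved by induction on $h$, concerns $S_q(G)$, the set of rank-$q$ $\mso$-types of the ordered structures $(G,<)$ as $<$ ranges over all linear orders. For every set $S$ of such types, the class $\{G : S_q(G)=S\}$ is $\fomod$-definable, uniformly in the labels.

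By order-invariance, $G\models\phi$ iff $\phi$ belongs to some (equivalently, every) type in $S_q(G)$. This gives an $\fomod$ sentence for $\phi$.

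For the inductive step I write $G$ as a root $v$ together with the components $C_1,\dots,C_p$ of $G-v$, each of tree-depth $<h$ and labelled by adjacency to $v$. By order-invariance it suffices to consider orders in which components appear as consecutive blocks, with $v$ placed first. The Feferman--Vaught/Shelah composition theorem for ordered sums gives a finite semigroup $(\mathcal T_q,\cdot)$ of rank-$q$ types with $\mathrm{type}(A+B)=\mathrm{type}(A)\cdot\mathrm{type}(B)$.

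Hence $S_q(G)$ is determined by the multiset $\{S_q(C_i)\}$. It consists of all products $\sigma_{\pi(1)}\cdots\sigma_{\pi(p)}$ with $\sigma_i\in S_q(C_i)$ and $\pi$ a permutation, prefixed by the root's type. Since $\mathcal T_q$ is finite, the set of such products depends only on how many components have each possible value of $S_q(C_i)$, counted up to a threshold $N$ and modulo a period $P$. This is the standard index/period argument applied in the finite semigroup of subsets of $\mathcal T_q$. Such threshold-and-modulo counting over components of a given kind is exactly what $\exists^{\ge N}$ and $\exists^{j(\Mod P)}$ can express, applied to a representative vertex per component (for example, the vertices nearest $v$, counted with the correct multiplicity as in Corollary~\ref{fohistsentence}). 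The induction hypothesis supplies $\fomod$ definitions of ``$S_q(C)=S$'' relative to $v$.

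\textbf{Main obstacle.} I expect the main obstacle to be making the decomposition itself $\fomod$-definable. The root $v$ is not canonical, and different elimination trees must give the same answer. My first attempt is to quantify existentially over a bounded tuple of roots along a root-to-leaf chain, which is possible because tree-depth $\le h$ bounds path lengths by $2^h$, so components of $G-\bar v$ are $\fo$-definable via bounded-distance reachability. I will also check that tree-depth $\le h$ of each piece is $\fo$-definable (it is, by a recursive sentence). Because $S_q(G)$ is an isomorphism invariant of $G$, any choice of valid decomposition yields the same set. So it suffices to assert ``there is a root $v$ such that $G-v$ has components of tree-depth $<h$ and the computed set equals $S$''; disconnected $G$ is handled directly by the same ordered-sum argument.
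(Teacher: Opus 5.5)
The paper gives no proof of this statement; it imports it from \cite{EickmeyerEH17}. Your easy direction is fine. The hard direction, however, has a genuine gap at the step you call the standard index/period argument.

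Ordered sums make $(\mathcal T_q,\cdot)$ non-commutative. So the set of products $\sigma_{\pi(1)}\cdots\sigma_{\pi(p)}$ over \emph{all} permutations $\pi$ is not determined by threshold/period counts of the component kinds. Index/period controls the powers $X^n$ of a single element of the subset semigroup, not unions over all interleavings of several kinds.

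For instance, let $G$ be a spider whose centre $v$ has $n_1$ legs of length $1$ and $n_2$ legs of length $2$. A first-order sentence in $E$ and $<$ can say that, after $v$, leaves of $v$ and (consecutive pairs forming) length-$2$ legs alternate, beginning with a leaf and ending with a length-$2$ leg. Some linear order, block or not, realises this iff $n_1=n_2\ge 1$. So for large $q$ your $S_q(G)$ records whether $n_1=n_2$. This is not $\fomod$-definable on spiders, since their $\fomod$-theory depends only on $n_1,n_2$ up to thresholds and residues. Hence your induction hypothesis, that each class $\{G : S_q(G)=S\}$ is $\fomod$-definable, is false.

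There is also a mismatch in how you handle $S_q$. You define it over all orders, which you need for independence from the chosen decomposition, but you compute it only over block orders, and interleaved orders realise further types.

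The missing idea is to use order-invariance more strongly. Restrict, recursively, to canonical orders: the root comes first, and the components of $G-v$ are grouped by their kind $X$ in a fixed order of kinds. Let $U(G)$ be the set of types of such orders, unioned over all admissible roots. This is an isomorphism invariant. For a fixed root the realised set is $\tau_v\cdot\prod_X X^{n_X}$, which really does depend only on each $n_X$ up to a threshold and modulo a period.

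A second gap is counting the components of $G-v$ of a given kind modulo $P$ in $\fomod$. The multiplicity trick of Corollary~\ref{fohistsentence} works only because components there have at most $c$ vertices, so $\rep(\tau)$ is a constant. Here the number of vertices of a component nearest to $v$ is unbounded: each component may contain $mP$ neighbours of $v$ with $m$ arbitrary, so no fixed multiplicity can be divided out.

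You need a definable set of bounded, exactly determined size in every component. One choice is the set of vertices whose deletion lowers the tree-depth of the component. These all lie in any minimal subgraph of the same tree-depth, whose size is bounded in terms of the tree-depth. Counting them modulo $rP$ among components with exactly $r$ such vertices then recovers the number of components modulo $P$.
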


	As it is also known that order-invariant $\mso$ has at least the expressive power of $\cmso$ on any graph class~\cite{EickmeyerEH17}. 
It follows that on graphs of bounded tree-depth, $\fomod$ has at least the expressive power of $\cmso$.
Further each sentence of $\fomod$ is also a sentence of $\cmso$ as $\cmso$ extends $\fomod$. Thus on classes of bounded tree depth $\cmso$ and $\fomod$ have the same expressive power.

Corollary \ref{cor:testingCMSO} is now an immediate consequence of Theorem \ref{thm:testingFO+MOD} and the fact that finitary graph classes have bounded tree-depth.

\end{document}

%% file: macros.tex
\usepackage{pgfplots}
\usetikzlibrary{calc}
\usepackage{amsmath}
\usepackage{algorithm2e}
\RestyleAlgo{ruled}
\usepackage{array}
\usepackage{float}

\usetikzlibrary{decorations.pathmorphing, arrows.meta}

\tikzset{decoration={snake,amplitude=.4mm,segment length=2mm, post length=2mm, pre length=0mm},}
\definecolor{ao}{rgb}{0.0, 0.5, 0.0}

\newcommand{\tn}{\textnormal}

\newcommand{\ie}{\text{i.\,e.}\ }
\newcommand{\eg}{\emph{e.\,g.}\ }

\newcommand{\N}{\mathbb N}

\renewcommand{\P}{\mathcal{P}}
\newcommand{\Z}{\mathbb Z}
\newcommand{\CC}{\mathcal{C}}
\newcommand{\Zpos}{\Z^+}
\newcommand{\coloneq}{:=}
\newcommand{\eps}{\varepsilon}
\renewcommand{\subset}{\subsetneq}
\newcommand{\bigland}{\bigwedge}
\newcommand{\biglor}{\bigvee}
\newcommand{\ol}{\overline}
\newcommand{\rmax}{r_\textnormal{max}}
\newcommand{\T}{\textsc{True}}
\newcommand{\F}{\textsc{False}}
\newcommand{\es}{\emptyset}

\newcommand{\dist}{\operatorname{dist}}

\newcommand{\B}{\operatorname{B}}
\newcommand{\ball}{\operatorname{ball}}

\renewcommand{\phi}{\varphi}
\renewcommand{\epsilon}{\varepsilon}

\newcommand{\lcm}{\operatorname{lcm}}
\newcommand{\rep}{\operatorname{rep}}
\newcommand{\Gfill}{G_{\tn{fill}}}
\newcommand{\Gfix}{G_{\tn{fix}}}
\newcommand{\nfill}{n_{\tn{fill}}}
\newcommand{\nfix}{n_{\tn{fix}}}

\DeclareMathSymbol{\mlq}{\mathord}{operators}{``}
\DeclareMathSymbol{\mrq}{\mathord}{operators}{`'}
\newcommand{\quo}[1]{{\mlq\mlq #1 \mrq\mrq}} 

\newcommand{\Tball}{T_{\textnormal{ball}}}
\newcommand{\Tcomp}{T_{\textnormal{comp}}}

\newcommand{\bdv}{\operatorname{bdv}}

\newcommand{\bhv}{\operatorname{bhv}}
\newcommand{\chv}{\operatorname{chv}}

\newcommand{\tbf}[1]{\textbf{#1}}

\providecommand{\abs}[1]{\left\lvert#1\right\rvert}
\providecommand{\norm}[1]{\left\lVert#1\right\rVert}
\newcommand{\cab}[1]{\norm{#1}_1}
\newcommand{\ceil}[1]{\left\lceil #1 \right\rceil}

\newcommand{\mso}{\operatorname{MSO}}
\newcommand{\cmso}{\operatorname{CMSO}}
\newcommand{\fo}{\textsc{FO}}
\newcommand{\fomod}{\operatorname{FOMOD}} 
\newcommand{\Mod}{\operatorname{mod}}
\renewcommand{\mod}{\Mod} 

